\newtheorem {problem}       {Problem}
\let\ALPHABET \mathcal
\let\VEC      \mathbf
\newcommand\IND{\mathds{1}}
\newcommand\PR {\mathds{P}}
\newcommand\EXP{\mathds{E}}
\newcommand\DEFINED{\coloneqq}
\newcommand\reals{\mathds{R}}
\newcommand\important[1]{{\bfseries\itshape#1}}
\newcommand\COST{l}
\begin{document}

\title {Decentralized Stochastic Control with Partial History Sharing: A Common Information Approach}

\author {Ashutosh Nayyar, Aditya Mahajan and Demosthenis Teneketzis%
        \thanks{Preliminary version of this paper appeared in the proceedings of
        the 46th Allerton conference on communication, control, and computation,
      2008 (see \cite{MahajanNayyarTeneketzis:2008}).}}

\maketitle

% FIXME: Remove for final version. Just to save space
\vspace*{-2\baselineskip}
\begin{abstract}
% FIXME: Remove for final version. Just to save space
\vspace*{-0.5\baselineskip}
A general model of decentralized stochastic control called \emph{partial history
sharing} information structure is presented. In this model, at each step the
controllers share part of their observation and control history with each other.
This general model subsumes several existing models of information sharing as
special cases. Based on the information commonly known to all the controllers,
the decentralized problem is reformulated as an equivalent centralized problem
from the perspective of a coordinator. The coordinator knows the common
information and select prescriptions that map each controller's local
information to its control actions. The optimal control problem at the
coordinator is shown to be a partially observable Markov decision process
(POMDP) which is solved using techniques from Markov decision theory. This
approach provides (a)~structural results for optimal strategies, and (b)~a
dynamic program for obtaining optimal strategies for all controllers in the
original decentralized problem. Thus, this approach unifies the various ad-hoc
approaches taken in the literature. In addition, the structural results on
optimal control strategies obtained by the proposed approach cannot be obtained
by the existing generic approach (the person-by-person approach) for obtaining
structural results in decentralized problems; and the dynamic program obtained
by the proposed approach is simpler than that obtained by the existing generic
approach (the designer's approach) for obtaining dynamic programs in
decentralized problems. 
\end{abstract}
% FIXME: Remove for final version. Just to save space
\vspace*{-1\baselineskip}
\begin{IEEEkeywords}
% FIXME: Remove for final version. Just to save space
\vspace*{-0.5\baselineskip}
Decentralized Control, Stochastic Control, Information Structures, Markov Decision Theory, Team Theory 
\end{IEEEkeywords}

% FIXME: Remove for final version. Just to save space
\vspace*{-1\baselineskip}
\section{Introduction} \label{sec:introduction}

Stochastic control theory provides analytic and computational techniques for
centralized decision making in stochastic systems with noisy observations.  For
specific models such as Markov decision processes and linear quadratic and
Gaussian systems, stochastic control gives results that are intuitively
appealing and computationally tractable. However, these results are derived
under the assumption that all decisions are made by a centralized decision maker
who sees all observations and perfectly recalls past observations and actions.
This assumption of a centralized decision maker is not true in a number of
modern control applications such as networked control systems, communication and
queuing networks, sensor networks, and smart grids. In such applications,
decisions are made by multiple decision makers who have access to different
information. In this paper, we investigate such problems of \emph{decentralized
stochastic control.} 

The techniques from centralized stochastic control cannot be directly applied to
decentralized control problems. Nonetheless, two general solution approaches that
indirectly use techniques from centralized stochastic control have been used in the
literature: (i)~\emph{the person-by-person approach} which takes the viewpoint of an
individual decision maker (DM); and (ii)~\emph{the designer's approach} which
takes the viewpoint of the collective team of DMs.

The person-by-person approach investigates the decentralized control problem
from the viewpoint of one DM, say DM~$i$ and proceeds as follows:
(i)~arbitrarily fix the strategy of all DMs except DM~$i$; and (ii)~use
centralized stochastic control to derive structural properties for the optimal
best-response strategy of DM~$i$. If such a structural property does not depend
on the choice of the strategy of other DMs, then it also holds for globally
optimal strategy of DM~$i$. By cyclically using this approach for all DMs, we
can identify the \emph{structure} of globally optimal strategies for all DMs.

A variation of this approach may be used to identify
person-by-person optimal strategies. The variation proceeds iteratively as follows.
Start with an initial guess for the strategies of all DMs.  At each iteration,
select one DM (say DM~$i$), and change its strategy to the best
response strategy given the strategy of all other DMs. Repeat the process until
a fixed point is reached, i.e., when no DM can improve performance by
unilaterally changing its strategy. The resulting strategies are
person-by-person optimal \cite{Ho:1980}, and in general,  not globally
optimal. 

In summary, the person-by-person approach identifies structural properties of
globally optimal strategies and provides an iterative method to obtain
person-by-person optimal strategies. This method has been successfully used to
identify structural properties of globally optimal strategies for various
applications including real-time communication~\cite{Witsenhausen:1979,
WalrandVaraiya:1983,Teneketzis:2006, NayyarTeneketzis:2008, KaspiMerhav:2010},
decentralized hypothesis testing and quickest change
detection~\cite{Tenney_Detection, Tsitsiklis_survey, Dec_Wald, VBP_Detection,
VBP_2, NayyarTeneketzis:2009, Nayyar_MTNS,Teneket_Quickest_Det,
V_change_detection}, and networked control systems \cite{WalrandVaraiya:1983a,
MahajanTeneketzis:2009, WuLall:2010}. Under certain conditions, the
person-by-person optimal strategies found by this approach are
globally optimal \cite{Radner:1962, Krainak, Ho:1980}.

The designer's approach, which is developed in~\cite{Witsenhausen:1973, Mahajan_thesis},
investigates the decentralized control problem from the viewpoint of the
collective team of DMs or, equivalently, from the viewpoint of a system designer
who knows the system model and probability distribution of the primitive random
variables and chooses control strategies for all DMs. Effectively, the designer
is solving a centralized planning problem. The designer's approach proceeds by:
(i)~modeling this centralized planning problem as a multi-stage,
\emph{open-loop} stochastic control problem in which the designer's decision at
each time is the control
\emph{law} for that time for all DMs; and (ii)~using centralized stochastic
control to obtain a dynamic programming decomposition. Each step of the
resulting dynamic program is a functional optimization problem (in contrast to
centralized dynamic programming where each step is a parameter optimization
problem).

The designer approach is often used in tandem with the person-by-person approach
as follows. First, the person-by-person approach is used to identify structural
properties of globally optimal strategies. Then, restricting attention to
strategies with the identified structural property, the designer's approach is
used to obtain a dynamic programming decomposition for selecting the globally
optimal strategy. Such a tandem approach has been used in various applications
including real-time communication~\cite{WalrandVaraiya:1983,
MahajanTeneketzis:2007, MahajanTeneketzis:2010}, decentralized hypothesis
testing~\cite{NayyarTeneketzis:2009}, and networked control
systems~\cite{WalrandVaraiya:1983a, MahajanTeneketzis:2009}.

In addition to the above general approaches, other specialized approaches have
been developed to address specific problems in decentralized systems.
Decentralized problems with partially nested information structure were defined
and studied in \cite{HoChu:1972}. Decentralized linear quadratic Gaussian (LQG)
control problems with two controllers and partially nested information structure
were studied in \cite{KimLall:2011,LessardLall:2011}. Partially nested
decentralized LQG problems with controllers connected via a graph were studied
in \cite{Rantzer:2006,Gattami:2009}. A generalization of partial nestedness
called stochastic nestedness was defined and studied in \cite{yuksel:2009}. An
important property of LQG control problems with partially nested information
structure is that there exists an affine control strategy which is globally
optimal. In general, the problem of finding the best affine control strategies
may not be a convex optimization problem. Conditions under which the problem of
determining optimal control strategies within the class of affine control
strategies becomes a convex optimization problem were identified in
\cite{voulgaris:2005,rotkowitz:2006}. 

Decentralized stochastic control problems with specific models of information
sharing among controllers have also been studied in the literature. Examples
include systems with delayed sharing information
structures~\cite{WalrandVaraiya:1978,Aicardi:1987,NMT:2011}, systems with
periodic sharing information structure~\cite{OoiVerboutLudwigWornell:1997},
control sharing information structure~\cite{Bismut:1972,Mahajan:2011}, systems
with broadcast information structure~\cite{WuLall:2010}, and systems with common
and private observations~\cite{MahajanNayyarTeneketzis:2008}.

In this paper, we present a new general model of decentralized stochastic
control called partial history sharing information structure. In this model, we
assume that: (i)~controllers sequentially share part of their past data (past
observations and control) with each other by means of a shared memory; and
(ii)~all controllers have perfect recall of commonly available data (common
information). This model subsumes a large class of decentralized control models
in which information is shared among the controllers.

For this model, we present a general solution methodology that reformulates the
original decentralized problem into an equivalent centralized problem from the
perspective of a coordinator. The coordinator knows the common
information and selects prescriptions that map each controller's local
information to its control actions. The optimal control problem at the
coordinator is shown to be a partially observable Markov decision process
(POMDP) which is solved using techniques from Markov decision theory. This
approach provides (a)~structural results for optimal strategies, and (b)~a
dynamic program for obtaining optimal strategies for all controllers in the
original decentralized problem. Thus, this approach unifies the various ad-hoc
approaches taken in the literature.

A similar solution approach is used in~\cite{NMT:2011} for a model that is a
special case of the model presented in this paper. We present an information
state (Eq.~\eqref{eq:delay-sharing}) for the model of~\cite{NMT:2011} that is simpler than that presented
in~\cite[Theorem~2]{NMT:2011}. A preliminary version of the general solution
approach presented here was presented in~\cite{MahajanNayyarTeneketzis:2008} for
a model that had features (e.g., direct but noisy communication links between
controllers) that are not necessary for partial history sharing. However, it can be shown that by suitable redefinition of variables, the model in \cite{MahajanNayyarTeneketzis:2008} can be recast as an instance of the model in this paper and vice versa (see Appendix \ref{app:equiv}). The information
state for partial history sharing that is presented in this paper (see
Thereom~\ref{thm:newinfostate})  is simpler than that presented
in~\cite[Eq.~(39)]{MahajanNayyarTeneketzis:2008}. 
%%%%%%%%%%%%%%%%%%%%%%%%%%%%%%%%%%%%%%%%%%%%%%STATIC EXAMPLE %%%%%%%%%%%%%%%%%%%%%%%%%%%%
\subsection {Common Information Approach for a Static Team Problem}

We first illustrate how common information can be used in a static team problem
with two controllers. Let $X$ denote the state of nature and $Y^*$, $Y^1$, $Y^2$
be three correlated random variables that depend on $X$. Assume that the joint
distribution of $(X, Y^*, Y^1, Y^2)$ is given.

Controller~$i$, $i=1,2$, observes $(Y^*, Y^i)$ and chooses a control action $U^i
= g^i(Y^*, Y^i)$. The system incurs a cost $l(X, U^1, U^2)$. The control
objective is to choose $(g^1, g^2)$ to minimize 
\begin{equation*}
  J(g^1, g^2) \DEFINED \EXP^{(g^1, g^2)}[ l(X, U^1, U^2) ] \end{equation*}

If all the system variables are finite valued, we can solve the above
optimization problem by a brute force search over all control strategies $(g^1,
g^2)$. For example, if all variables are binary valued, we need to compute the
performance of $2^4 \times 2^4 = 256$ control strategies and choose the one with
the best performance. 

In this example, both controllers have a common observation $Y^*$. One of the
main ideas of this paper is to use such common information among the controllers
to simplify the search process as follows. Instead of specifying the control
strategies $(g^1, g^2)$ directly, we consider a coordinated system
in which a \emph{coordinator} observes the common information $Y^*$ and chooses
\emph{prescriptions} $(\Gamma^1, \Gamma^2)$ where $\Gamma^i$ is a mapping from
$Y^i$ to $U^i$, $i=1,2$. Hence, $(\Gamma^1, \Gamma^2) = d(Y^*)$, where $d$ is
called the \emph{coordination strategy}. The coordinator then communicates these
prescriptions to the controllers who simply use them to choose $U^i =
\Gamma^i(Y^i)$, $i=1,2$.

It is easy to verify (see Proposition~\ref{prop:equiv} for a formal proof) that
choosing the control strategies $(g^1, g^2)$ in the original
system is equivalent to choosing a coordination strategy $d$ in the coordinated
system. The problem of choosing the best coordination strategy, however, is a
centralized problem in which the coordinator is the only decision-maker. 

For example, consider the case when all system variables are binary valued. 
For any coordination strategy $d$, let $(\gamma_0^1, \gamma_0^2) = d(0)$ and
$(\gamma_1^1, \gamma_1^2) = d(1)$. Then, the cost associated with this
coordination strategy is given as:
\begin{align*}
  J(d) \DEFINED \EXP^{(d)}[ l(X, U^1, U^2) ] 
  &= \mathds{P}(Y^*=0) \EXP[l(X, \gamma_0^1(Y^1), \gamma_0^2(Y^2))|Y^*=0]
  \notag \\
  & \quad +  \mathds{P}(Y^*=1) \EXP[l(X, \gamma_1^1(Y^1), \gamma_1^2(Y^2))|Y^*=1]
\end{align*}
To minimize the above cost, we can minimize the two terms separately. Therefore,
to find the best coordination strategy $d$, we can search for optimal
prescriptions for the cases $Y^* = 0$ and $Y^*=1$ separately. Searching for the best
prescriptions for each of these cases involves computing the performance of $2^2
\times 2^2 = 16$ prescription pairs and choosing the one with the best
performance. Thus, to find the best coordination strategy, we need to evaluate
the performance of $16 + 16 = 32$ prescription pairs. Contrast this with the
$256$ strategies whose costs  we need to evaluate to solve the original problem
by brute force. 

The above example described a static system and illustrates that common
information can be exploited to convert the decentralized optimization problem
into a centralized optimization problem involving a coordinator. 
In this paper, we build upon this basic idea and present a solution approach
based on common information that works for dynamical decentralized systems as
well. Our approach converts the decentralized problem into a centralized stochastic control
problem (in particular, a partially observable Markov decision process),
identifies structure of optimal control strategies, and provides a  dynamic
program like decomposition for the decentralized problem.

%%%%%%%%%%%%%%%%%%%%%%%%%%
\subsection{Contributions of the Paper}

We introduce a general model of  decentralized stochastic control problem in
which multiple controllers share part of their information with each other. We
call this model the \emph{partial history sharing information structure}. This model
subsumes several existing models of  information sharing in decentralized stochastic
control as special cases (see Section~\ref{sec:specialcases}). We establish two
results for our model. Firstly, we establish a structural property of optimal
control strategies. Secondly, we provide   a dynamic programming decomposition of the
problem of finding optimal control strategies. As in
\cite{NMT:2011,MahajanNayyarTeneketzis:2008}, our results are derived using a
common information based approach (see Section~\ref{sec:proof}). This approach
differs from the person-by-person approach and the designer's approach mentioned
earlier. In particular, the structural properties found in this paper cannot be
found by the person-by-person approach described earlier. Moreover, the
dynamic programming decomposition found in this paper is distinct from ---and simpler
than--- the dynamic programming decomposition based on the designer's approach. For a
general framework for using common information in sequential decision making
problems, see \cite{mythesis}.

\subsection{Notation} \label{sec:notation}

Random variables are denoted by upper case letters; their realization by the
corresponding lower case letter. For integers $a \le b$ and $c \le d$, 
$X_{a:b}$ is a short hand for the vector $(X_a,
X_{a+1}, \dots, X_b)$ while $X^{c:d}$ is a short hand for the vector $(X^c,
X^{c+1}, \dots, X^{d})$. When $a > b$, $X_{a:b}$ equals the empty set.
The combined notation $X^{c:d}_{a:b}$ is a short hand
for the vector $(X^j_i : i = a, a+1, \dots, b$, $j = c, c+1, \dots, d)$. In
general, subscripts are used as time index while superscripts are used to index
controllers.  Bold letters $\VEC X$ are used as a short hand  for the vector
$(X^{1:n})$. $\mathds{P}(\cdot)$ is the probability of an event,
$\mathds{E}(\cdot)$ is the expectation of a random variable. For a collection of
functions $\boldsymbol{g}$, we use $\mathds{P}^{\boldsymbol{g}}(\cdot)$ and
$\mathds{E}^{\boldsymbol{g}}(\cdot)$ to denote that the probability
measure/expectation depends on the choice of functions in $\boldsymbol{g}$.
$\IND_A(\cdot)$ is the indicator function of a set $A$. For singleton sets
$\{a\}$, we also denote $\IND_{\{a\}}(\cdot)$ by $\IND_a(\cdot)$.

For a singleton $a$ and a set $B$, $\{a,B\}$ denotes the set $\{a\} \cup B$. For
two sets $A$ and $B$, $\{A, B\}$ denotes the set $A \cup B$. For two finite sets
$\mathcal{A}, \mathcal{B}$,  $F(\mathcal{A},\mathcal{B})$ is the set of all
functions from $\mathcal{A}$ to $\mathcal{B}$. Also, if $\mathcal{A} =
\emptyset$, $F(\mathcal{A},\mathcal{B}):= \mathcal{B}$. For a finite set
$\mathcal{A}$,  $\Delta(\mathcal{A})$ is the set of all probability mass
functions over $\mathcal{A}$. For the ease of exposition, we assume that all
state, observation and control variables take values in finite sets.

For two random variables (or random vectors) $X$ and $Y$ taking values in
$\ALPHABET X$ and $\ALPHABET Y$, $\mathds{P}(X=x|Y)$ denotes the conditional
probability of the event $\{X=x\}$ given $Y$ and $\mathds{P}(X|Y)$ denotes the
conditional PMF (probability mass function) of $X$ given $Y$, that is, it
denotes the collection of conditional probabilities $\mathds{P}(X=x|Y), x \in
\ALPHABET X$. Finally, all equalities involving random variables are to be
interpreted as almost sure equalities (that is, they hold with probability one).

\subsection{Organization} \label{sec:organization}

The rest of this paper is organized as follows. We present our model of a
decentralized stochastic control problem  in Section~\ref{sec:PF}. We also
present several special cases of our model in this section. We prove our main
results in Section~\ref{sec:proof}. We apply our result to some special cases in
Section~\ref{sec:special_results}. We present a simplification of our result and
a generalization of our model in Section~\ref{sec:generalization}. We consider
the infinite time-horizon discounted cost analogue of our problem in
Section~\ref{sec:infinite}. Finally, we conclude in
Section~\ref{sec:conclusion}. 

\section{Problem Formulation} \label{sec:PF}

\subsection{Basic model: Partial History Sharing Information Structure}
\label{sec:model}

% For the ease of exposition, we describe the model in discrete time and assume
% that the system runs for a finite horizon and all variables are finite. We refer
% to this model as the \emph{basic model}. We show
% how to extend the model to infinite horizon in Section~\ref{sec:infinite}.  With
% appropriate technical assumptions, the model and the results also apply to
% continuous time systems and continuous valued random variables.

\subsubsection{The Dynamic System}

Consider a dynamic system with $n$ controllers. The system operates in discrete
time for a horizon $T$. Let $X_t \in \ALPHABET X_t$ denote the state of the
system at time $t$, $U^i_t \in \ALPHABET U^i_t$ denote the control action of
controller~$i$, $i=1,\dots,n$ at time $t$, and $\VEC U_t$ denote the vector
$(U^1_t, \dots, U^n_t)$.

The initial state $X_1$ has a probability distribution $Q_1$ and evolves according~to
\begin{equation}
  \label{eq:state}
  X_{t+1} = f_t(X_t, \VEC U_t, W^0_t),
\end{equation}
where $\{W^0_t\}_{t=1}^{T}$ is a sequence of i.i.d\@. random
variables with probability distribution~$Q^0_W$.

\subsubsection{Data available at the controller}

At any time $t$, each controller has access to three types of data: current
observation, local memory, and shared memory.
\begin{enumerate}
  \item[(i)]  \important{Current local observation}: 
    Each controller makes a local observation $Y^i_t \in \ALPHABET Y^i_t$ on the state of the system at time $t$,
    \begin{equation}  \label{eq:observation}
      Y^i_t = h^i_t(X_t, W^i_t),
    \end{equation}
    where $\{W^i_t\}_{t=1}^{T}$ is a sequence of i.i.d\@. random variables
    with probability distribution $Q^i_W$. 
    We assume that the random variables
    in the collection $\{X_1,W^j_t,t=1,\dots,T, j =0, 1,\dots, n\}$, called \emph{primitive random variables},  are 
 mutually  independent. 

  \item [(ii)] \important{Local memory  }:
            Each controller stores a subset $M^i_t$ of its past local observations and its past actions in a local memory:
        \begin{equation}\label{eq:memory}
          M^i_t \subset \{Y^i_{1:t-1},U^i_{1:t-1}\}.
        \end{equation}
        At $t=1$, the local memory is empty, $M^i_1 = \emptyset$.

      \item [(iii)] \important{Shared memory}:
        In addition to its local memory, each controller has access to a shared
        memory. The contents $C_t$ of the shared memory at time $t$ are a
        subset of the past local observations and control actions of all
        controllers:
        \begin{equation}\label{eq:shared memory}
          C_t \subset \{\VEC Y_{1:t-1}, \VEC U_{1:t-1}\}
        \end{equation}
        where $\VEC Y_t$ and $\VEC U_t$ denote the vectors $(Y^1_t, \dots, Y^n_t)$ and 
$(U^1_t, \dots, U^n_t)$ respectively. At $t=1$, the shared memory is empty, $C_1 = \emptyset$.
    
\end{enumerate}

Controller~$i$ chooses action $U^i_t$ as a function of the total data $(Y^i_t,
M^i_t, C_t)$ available to it.  Specifically, for every controller~$i$, $i = 1,
\dots, n$, 
\begin{equation}
  \label{eq:control}
  U^i_t = g^i_t(Y^i_t, M^i_t, C_t),
\end{equation}
where $g^i_t$ is called the \emph{control law} of controller~$i$. The collection
$\VEC g^i = (g^i_1, \dots, g^i_T)$ is called the \emph{control strategy} of
controller~$i$. The collection $\VEC g^{1:n} = (\VEC g^1, \dots, \VEC g^n)$ is
called the \emph{control strategy} of the system.

\subsubsection{Update of local and shared memories}

\begin{enumerate}
  \item [(i)] \emph{Shared memory update}: 
    After taking the control action at time $t$, the local information at
    controller~$i$ consists of the contents $M^i_t$ of its local memory, its
    local observation $Y^i_t$ and its control action $U^i_t$. Controller~$i$
    sends a subset $Z^i_t$ of this local information $\{M^i_t, Y^i_t, U^i_t\}$
    to the shared memory. The subset $Z^i_t$ is chosen according to a
    pre-specified protocol. The contents of shared memory are nested in time, that is, the contents $C_{t+1}$
    of the shared memory at time $t+1$ are the contents $C_t$ at time $t$ augmented
    with the new data $\VEC Z_t = (Z^1_t,Z^2_t,\ldots,Z^n_t)$ sent by all the controllers at time $t$:
    \begin{equation} \label{eq:shared_update}
      C_{t+1} = \{C_t, \VEC Z_t\}.
    \end{equation}

  \item [(ii)] \emph{Local memory update}:
    After taking the control action and sending data to the shared memory at
    time $t$, controller~$i$ updates its local memory according to a
    pre-specified protocol. The content $M^i_{t+1}$ of the local memory can at
    most equal the total local information $\{M^i_t, Y^i_t, U^i_t\}$ at the
    controller. However, to ensure that the local and shared memories at time
    $t+1$ don't overlap, we assume that 
    \begin{equation} \label{eq:local update}
      M^i_{t+1} \subset \{M^i_t,Y^i_t,U^i_t\}\setminus Z^i_t.
    \end{equation}
\end{enumerate}
% \begin{remark}
%   Equation \eqref{eq:local update} can be interpreted as follows: The
%   contents of controller $i$'s local memory at time $t+1$ can be no more
%   than the  local information at the end of time $t$ ($\{M^i_t, Y^i_t,
%   U^i_t\}$). However, the information sent to the shared memory at time $t$
%   ($Z^i_t$) is removed from the local memory at time $t+1$. This ensures
%   that the contents of the local and the shared memories do not overlap at
%   any time $t$. That is, 
%   \begin{equation}
%     M^i_t \cap C_t = \emptyset, \quad t=1,\ldots,T
%   \end{equation}
% \end{remark}

Figure 1 shows the time order of observations, actions and memory updates.
\begin{figure}[htp]
\begin{center}
\includegraphics[width=7.5cm]{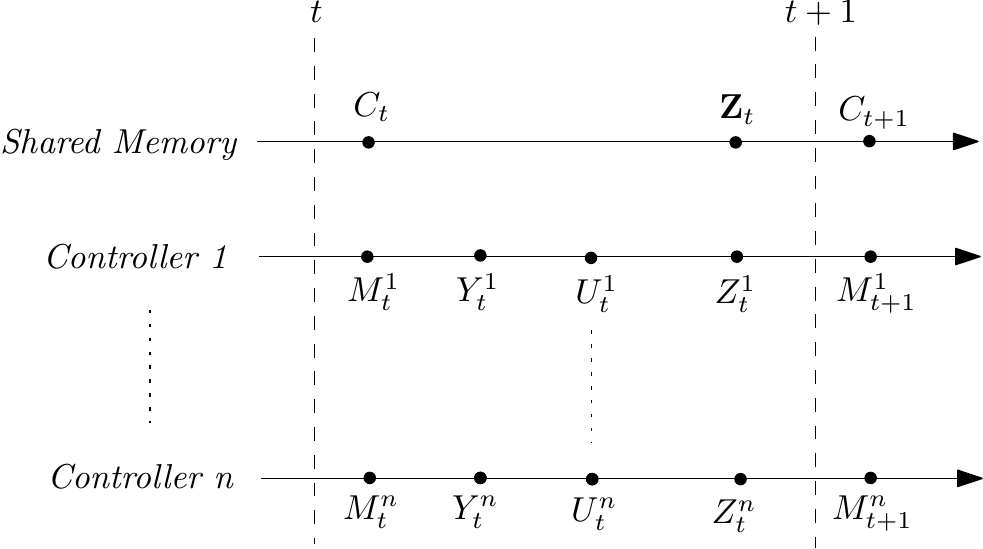}
\caption{Time ordering of Observations, Actions and Memory Updates}
\end{center}
\end{figure}
We refer to the above model as the \emph{partial history sharing information structure}.

\subsubsection{The optimization problem}

%\footnote{For any DM $i$, a part of its locally obtained information belongs to the shared information. Hence, the intersection of $L^i_t$ and $C_t$ will, in general, not be empty.} 

At time $t$, the system incurs a cost $\COST(X_t, \VEC U_t)$. The performance of
the control strategy of the system is measured by the expected total cost
\begin{equation}
  \label{eq:cost}
  J(\VEC g^{1:n}) \DEFINED \EXP^{\VEC g^{1:n}}\Big[ 
  \sum_{t=1}^T \COST(X_t, \VEC U_t) \Big],
\end{equation}
where the expectation is with respect to the joint probability measure on
$(X_{1:T}, \VEC U_{1:T})$ induced by the choice of $\VEC g^{1:n}$. 

We are interested in the following optimization problem.
\begin{problem}
  \label{prob:finite}
  For the model described above, given the state evolution functions $f_t$, the
  observation functions $h^i_t$, the protocols for updating local and share
  memory,  the cost function $l$, the distributions $Q_1$, $Q^i_W$,
  $i=0,1,\dots,n$, and the horizon $T$, find a control strategy $\VEC g^{1:n}$
  for the system that minimizes the expected total  cost given
  by~\eqref{eq:cost}.
\end{problem}

\subsection{Special Cases: The Models} \label{sec:specialcases}

In the above model, although we have not specified the exact protocols by which
controllers update the local and shared memories, we assume that  pre-specified
protocols are being used. Different choices of this protocol result in different
information structures for the system. In this section, we describe several
models of decentralized control systems that can be viewed as special cases of
our model by assuming a particular choice of protocol for local and shared
memory updates.

\subsubsection{Delayed Sharing Information Structure} \label{sec:ex1}

Consider the following special case of the model of Section~\ref{sec:model}. 

\begin{enumerate}
  \item[(i)] The shared memory at the beginning of time $t$ is $C_t = \{\VEC Y_{1:t-s},\VEC U_{1:t-s}\}$,  where $s \geq 1$ is a fixed number. The local memory at the beginning of time $t$ is $M^i_t = \{Y^i_{t-s+1:t-1},U^i_{t-s+1:t-1}\}$.
  \item [(ii)]  At each time $t$, after taking the action $U^i_t$, controller~$i$ sends
    $Z^i_t = \{Y^i_{t-s+1},U^i_{t-s+1}\}$ to the shared memory and the shared memory at $t+1$ becomes $C_{t+1} = \{\VEC Y_{1:t-s+1},\VEC U_{1:t-s+1}\}$.
    
  \item[(iii)]  After sending $Z^i_t=\{Y^i_{t-s+1},U^i_{t-s+1}\}$ to the shared memory, controller~$i$ updates the local memory to
    $M^i_{t+1} = \{Y^i_{t-s+2:t},U^i_{t-s+2:t}\}$. 
\end{enumerate}

In this spacial case, the observations and control actions of each controller
are shared with every other controller after a delay of $s$ time steps. Hence,
the above special case corresponds to the delayed sharing information structure
considered in \cite{Witsenhausen:1971, WalrandVaraiya:1978, NMT:2011}.

\subsubsection{Delayed State Sharing Information Structure} \label{sec:ex2}

A special case of the delayed sharing information structure (which itself is a
special case of our basic model) is the \emph{delayed state sharing} information
structure \cite{Aicardi:1987}. This information structure can be obtained from
the delayed sharing information structure by making the following assumptions:
\begin{enumerate}
  \item[(i)] The state of the system at time $t$ is a $n$-dimensional vector
    $X_t = (X^1_t,X^2_t,\ldots,X^n_t)$.
  \item[(ii)] At each time $t$, the current local observation of controller $i$
    is \( Y^i_t = X^i_t, \) for $i =1,2,\ldots,n$.
\end{enumerate} 
In this spacial case, the complete state vector $X_t$ is available to all
controllers after a delay of $s$ time steps.

\subsubsection{Periodic Sharing Information Structure} \label{sec:ex3}

Consider the following special case of the model of Section~\ref{sec:model}
where controllers update the shared memory periodically with period $s \geq 1$:
\begin{enumerate}
  \item [(i)]  For time $ ks < t \leq (k+1)s$, where $k=0,1,2,\ldots$, the
    shared memory at the beginning of time $t$ is
    \(
      C_t = \{\VEC Y_{1:ks}, \VEC U_{1:ks}\}.
    \)
    The local memory at the beginning of time $t$  is
    \(
      M^i_t = \{Y^i_{ks+1:t-1},U^i_{ks+1:t-1}\}.
    \)
  \item[(ii)]
    At each time $t = (k+1)s$, $k=1,2,\dots$, after taking the action $U^i_t$, controller~$i$ sends
    $Z^i_t = \{Y^i_{ks+1:(k+1)s},U^i_{ks+1:(k+1)s}\}$ to the shared memory. At
    other times, each controller does not send anything (thus $Z^i_t =
    \emptyset$). 
    % \begin{equation}
    %   Z^i_t = \left\{\begin{array}{ll}
    %     \emptyset, &\mbox{if $ks<t<(k+1)s$},\\
    %     \{Y^i_{ks+1:(k+1)s},U^i_{ks+1:(k+1)s}\}, &\mbox{if $t = (k+1)s$.}
    %   \end{array}
    %   \right.
    % \end{equation}
   
  \item [(iii)] After sending $Z^i_t$ to the shared memory, controller~$i$
    updates the local memory to $M^i_{t+1} = \{M^i_t,Y^i_t,U^i_t\}\setminus
    Z^i_t$.
\end{enumerate}
In this spacial case, the entire history of observations and control actions are
shared periodically between controllers with period $s$. Hence, the above
special case corresponds to the periodic sharing information structure
considered in \cite{OoiVerboutLudwigWornell:1997}.

\subsubsection{Control Sharing Information Structure} \label{sec:ex4}

Consider the following special case of the model of Section~\ref{sec:model}. 

\begin{enumerate}
  \item [(i)] The shared memory at the beginning of time $t$ is $C_t = \{\VEC
      U_{1:t-1}\}$. The local memory at the beginning of time $t$ is $M^i_t = \{
        Y^i_{1:t-1}\}$.

  \item [(ii)] At each time $t$, after taking the action $U^i_t$, controller~$i$
    sends $Z^i_t = \{U^i_{t}\}$ to the shared memory.

  \item[(iii)]  After sending $Z^i_t = U^i_t$ to the shared memory,
    controller~$i$ updates the local memory to $M^i_{t+1} = Y^i_{1:t}$. 
\end{enumerate}
In this spacial case, the control actions of each controller are shared with
every other controller after a delay of $1$ time step. Hence, the above special
case corresponds to the control sharing information structure considered in
\cite{Bismut:1972}. 
%A related special case is the situation where the local memory at each
%controller consists of only $s$ most recent observations, that is, $M^i_t =
%Y^i_{t-s:t-1}$.

\subsubsection{No Shared Memory with or without finite local memory}  \label{sec:ex5}

Consider the following special case of the model of Section~\ref{sec:model}. 

\begin{enumerate}
  \item [(i)] The shared memory at each time is empty, $C_t = \emptyset$ and the
    local memory at the beginning of time $t$ is $M^i_t = \{Y^i_{t-s:t-1},
    U^i_{t-s:t-1}\}$, where $s\geq1$ is a fixed number.

  \item[(ii)] Controllers do not send any data to shared memory, $Z^i_t =
    \emptyset$.
  \item[(iii)] At the end of time $t$, controllers update their local memories
    to $M^i_{t+1} = \{Y^i_{t-s+1:t}, U^i_{t-s+1:t}\}$.
\end{enumerate}
In this special case, the controllers don't share any data. The above model is
related to the finite-memory controller model of \cite{Sandell:phd}. A related
special case is the situation where the local memory at each controller consists
of all of its past local observations and its past actions, that is, $M^i_t =
\{Y^i_{1:t-1}, U^i_{1:t-1}\}$. 

\begin{remark}
   All the special cases considered above are examples of \emph{symmetric
   sharing}. That is, different controllers  update their local memories
   according to identical protocols  and the data sent by a controller to the
   shared memory is selected according to identical protocols. However, this
   symmetry is not required for our model. Consider for example, the delayed
   sharing information structure where at the end of time $t$, controller~$i$
   sends $Y^i_{t-s_i},U^i_{t-s_i}$ to the shared memory, with $s_i,
   i=1,2,\ldots,n,$  being fixed, but not necessarily identical, numbers. This
   kind of \emph{asymmetric sharing} is also a special case of our model. 
\end{remark}
   
\section{Main Results} \label{sec:proof}

For centralized systems, stochastic control theory provides two important
analytical results. Firstly, it provides a \emph{structural result}. This result
states that there is an optimal control strategy which selects control actions
as a function only of the controller's posterior belief on the state of the
system conditioned on all its observations and actions till the current time.
The controller's posterior belief is called its \emph{information state}.
Secondly, stochastic control theory provides a \emph{dynamic programming decomposition}
of the problem of finding optimal control strategies in centralized systems.
This dynamic programming decomposition 
allows one to evaluate the optimal action for each realization of the
controller's information state in a backward inductive manner.

In this section, we provide a structural result and a dynamic programming decomposition
for the decentralized stochastic control problem with partial information
sharing formulated above (Problem~\ref{prob:finite}). The main idea of the proof
is  to formulate an equivalent centralized stochastic control problem; solve the
equivalent problem using classical stochastic-control techniques; and translate
the results back to the basic model. For that matter, we proceed as follows:
\begin{enumerate}
  \item Formulate a centralized \emph{coordinated system} from the point   of
    view of a \emph{coordinator} that observes only the common information among
    the controllers in the basic model, i.e., the coordinator observes the
    shared memory $C_t$  but not the local memories $(M^i_t$, $i=1,\dots,n)$ or
    local observations $(Y^i_t$, $i=1,\dots,n)$. The coordinator chooses
    prescriptions $\mathbf \Gamma_t = (\Gamma^1_t, \dots, \Gamma^n_t)$, where
    $\Gamma^i_t$ is a mapping from $(Y^i_t, M^i_t)$ to $U^i_t$, $i=1,\dots,n$.

  \item Show that the coordinated system is a POMDP (partially     observable
    Markov decision process).

  \item For the coordinated system, determine the structure of an optimal
    coordination strategy and a dynamic program to find an optimal
    coordination strategy.

  \item Show that any strategy of the coordinated system is implementable in the
    basic model with the same value of the total expected cost. Conversely, any
    strategy of the basic model is implementable in the coordinated system with
    the same value of the total expected cost. Hence, the two systems are
    equivalent.

  \item Translate the structural results and dynamic programming decomposition
    of the coordinated system (obtained in stage~3) to the basic model.
\end{enumerate}

\subsection* {\textbf{Stage 1: The coordinated system}}

Consider a \emph{coordinated system} that consists of a coordinator and $n$
passive controllers. The coordinator knows the shared memory $C_t$ at time~$t$,
but not the local memories $(M^i_t$, $i=1,\dots,n)$ or local observations
$(Y^i_t$, $i=1,\dots,n)$. At each time $t$, the coordinator  chooses mappings
$\Gamma^i_t : \mathcal{Y}^i_t \times \mathcal{M}^i_t \mapsto \mathcal{U}^i_t$,
$i=1,2,\ldots,n$, according to
\begin{equation} \label{eq:coordinator}
  \mathbf{\Gamma_t} = d_t(C_t,\mathbf{\Gamma_{1:t-1}}),
\end{equation} 
where $\mathbf{\Gamma_t} = (\Gamma^1_t,\Gamma^2_t,\ldots,\Gamma^n_t)$. The
function $d_t$ is called the \emph{coordination rule} at time $t$ and the
collection of functions $\VEC d \DEFINED (d_1, \dots, d_T)$ is called the
\emph{coordination   strategy}. The selected $\Gamma^i_t$ is communicated to
controller $i$ at time $t$.

The function $\Gamma^i_t$ tells controller~$i$  how to process its current local
observation and its local memory at time $t$; for that reason, we call
$\Gamma^i_t$ the \emph{coordinator's prescription} to controller~$i$.
Controller~$i$ generates an action using its prescription as follows:
\begin{equation} \label{eq:prescription}
  U^i_t = \Gamma^i_t(Y^i_t,M^i_t).
\end{equation}

For this coordinated system, the system dynamics, the observation model and the
cost are the same as the basic model of Section~\ref{sec:model}: the system
dynamics are given by~\eqref{eq:state}, each controller's current observation is
given by~\eqref{eq:observation} and the instantaneous cost at time~$t$ is
$l(X_t,\mathbf{U}_t)$. As before, the performance of a coordination strategy is
measured by the expected total cost
\begin{equation}
  \label{eq:coordination-cost}
  \hat J(\VEC d)  = \EXP\Big[ \sum_{t=1}^T \COST(X_t, \mathbf{U}_t) \Big],
\end{equation}
where the expectation is with respect to a joint measure on $(X_{1:T},
\mathbf{U}_{1:T})$ induced by the choice of~$\VEC d$. 

In this coordinated system, we are interested in the following optimization
problem:
\begin{problem}
  \label{prob:coordinator}
  For the model of the coordinated system described above, find a coordination
  strategy $\VEC d$ that minimizes the total expected cost given
  by~\eqref{eq:coordination-cost}.
\end{problem}

\subsection*{\textbf{Stage 2: The coordinated system as a POMDP}}

We will now show that the coordinated system is a partially observed Markov
decision process. For that matter, we first describe the model of POMDPs
\cite{Whittle:1983}.

\subsubsection*{POMDP Model}
A partially observable Markov decision process consists of a state process $S_t
\in \mathcal{S}$, an observation process $O_t \in \mathcal{O}$, an action
process $A_t \in \mathcal{A}$, $t=1,2,\ldots,T$, and a single decision-maker
where
\begin{enumerate}
  \item The action at time $t$ is chosen by the decision-maker as a function of
    observation and action history, that is,
    \begin{equation}
      A_t = d_t(O_{1:t}, A_{1:t-1}),
    \end{equation}
    $d_t$ is the decision rule at time $t$.
  \item After the action at time $t$ is taken, the new state and new observation
    are generated according to the transition probability rule 
    \begin{equation}
      \mathds{P}(S_{t+1},O_{t+1}|S_{1:t},O_{1:t},A_{1:t}) = \mathds{P}(S_{t+1},O_{t+1}|S_t,A_t).
    \end{equation}
  \item At each time, an instantaneous cost $\tilde \COST(S_t,A_t)$ is incurred.
  \item The  optimization problem for the decision-maker is to choose a decision
    strategy $\VEC d := (d_1,\ldots,d_T)$ to minimize a total cost given as
    \begin{equation}
      \mathds{E}[\sum_{t=1}^T \tilde l(S_t,A_t)].
    \end{equation}
\end{enumerate}

The following well-known results provides the structure of optimal strategies
and a dynamic program for POMDPs. For details, see \cite{Whittle:1983}.
\begin{theorem}[POMDP Result] \label{thm:pomdp} Let $\Theta_t$ be the
  conditional probability distribution of the state $S_t$ at time $t$ given the
  observations $O_{1:t}$ and actions $A_{1:t-1}$,
  \[ \Theta_t(s) = \mathds{P}(S_t=s|O_{1:t},A_{1:t-1}), ~~~s \in \mathcal{S}.\]
  Then,
  \begin{enumerate}
    \item $\Theta_{t+1} = \eta_t(\Theta_t,A_t,O_{t+1})$, where $\eta_t$ is the
      standard non-linear filter: If $\theta_t,a_t,o_{t+1}$ are the realizations
      of $\Theta_t,A_t$ and $O_{t+1}$, then the realization of $s^{th}$ element
      of the vector $\Theta_{t+1}$ is
      \begin{align}
        \theta_{t+1}(s) &= \frac{\sum_{s'}\theta_t(s')\mathds{P}(S_{t+1}=s,O_{t+1}=o_{t+1}|S_t=s',A_t=a_t)}{\sum_{\hat{s},\tilde{s}}\theta_t(\hat{s})\mathds{P}(S_{t+1}=\tilde{s},O_{t+1}=o_{t+1}|S_t=\hat{s},A_t=a_t)} \nonumber \\
        &=: \eta^s_t(\theta_t,a_t,o_{t+1})
      \end{align}
      and $\eta_t(\theta_t,a_t,o_{t+1})$ is the vector
      $(\eta^s_t(\theta_t,a_t,o_{t+1}))_{s \in \mathcal{S}}$.

    \item There exists an optimal decision strategy of the form
      \[ A_t = \hat{d}_t(\Theta_t). \]
      %where $\Theta_t = \mathds{P}(S_t|O_{1:t},A_{1:t-1})$. 
      Further, such a strategy can be found by the following dynamic program:
      \begin{equation}
        V_{T}(\theta) = \inf _{a} 
        \mathds{E}\{\tilde l(S_T,a) |  \Theta_T=\theta\},
      \end{equation}
      and for $1 \leq t \leq T-1$,
      \begin{equation}
        V_{t}(\theta) = \inf _{a} 
        \mathds{E}\big\{\tilde l(S_t,a)+ V_{t+1}(\eta_t(\theta,a,O_{t+1})) \big| \Theta_t=\theta, A_t=a \big\}.
      \end{equation}
  \end{enumerate}
\end{theorem}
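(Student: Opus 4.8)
The plan is to show that the belief process $\{\Theta_t\}$ is a controlled Markov chain driven by the actions $\{A_t\}$, to use this to reduce the POMDP to an equivalent \emph{fully observed} Markov decision process whose state is $\Theta_t$, and then to invoke the standard dynamic programming theorem for finite-horizon MDPs.

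First I would establish part~1 by a direct application of Bayes' rule. Fix a decision strategy $\VEC d$ and a positive-probability realization $o_{1:t+1}, a_{1:t}$. Because $A_t = d_t(O_{1:t}, A_{1:t-1})$ is a deterministic function of $(O_{1:t},A_{1:t-1})$, conditioning additionally on $A_t$ does not alter the conditional law of $S_t$; hence $\PR(S_t = s' \mid O_{1:t}, A_{1:t}) = \Theta_t(s')$. Using the one-step transition rule $\PR(S_{t+1},O_{t+1} \mid S_{1:t},O_{1:t},A_{1:t}) = \PR(S_{t+1},O_{t+1}\mid S_t,A_t)$, one computes $\PR(S_{t+1}=s,\, O_{t+1}=o_{t+1} \mid O_{1:t},A_{1:t}) = \sum_{s'} \Theta_t(s')\,\PR(S_{t+1}=s,\,O_{t+1}=o_{t+1}\mid S_t=s',A_t)$, which is precisely the numerator of $\eta^s_t$; summing over $s$ gives the denominator, which equals $\PR(O_{t+1}=o_{t+1}\mid O_{1:t},A_{1:t})$, so the ratio is $\Theta_{t+1}(s)$. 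Two by-products of this computation drive the whole argument: (i)~the filter $\eta_t$ depends on the past only through $(\Theta_t,A_t,O_{t+1})$; and (ii)~the predictive law of $O_{t+1}$ given the history depends only on $(\Theta_t,A_t)$. Together these say that the conditional distribution of $\Theta_{t+1}$ given $(O_{1:t},A_{1:t})$ is a function of $(\Theta_t,A_t)$ alone, i.e.\ $\{\Theta_t\}$ is a controlled Markov chain. (As usual, $\eta_t$ is defined only on the event where the denominator is positive; its values off that null set are immaterial.)

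For part~2 I would show that the POMDP is equivalent to a fully observed MDP with state $\Theta_t \in \PSP S$ and action $A_t$. The running cost satisfies $\EXP[\tilde\COST(S_t,A_t)\mid O_{1:t},A_{1:t}] = \sum_s \Theta_t(s)\,\tilde\COST(s,A_t) =: \hat\COST(\Theta_t,A_t)$, a function of $(\Theta_t,A_t)$ alone, so by the tower property the objective $\EXP\bigl[\sum_{t=1}^T \tilde\COST(S_t,A_t)\bigr]$ equals $\EXP\bigl[\sum_{t=1}^T \hat\COST(\Theta_t,A_t)\bigr]$. Combined with the controlled-Markov property from part~1, this exhibits $(\{\Theta_t\},\{A_t\})$ as a finite-horizon fully observed MDP. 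The classical MDP theorem --- backward induction together with the standard measurable-selection step --- then yields an optimal strategy of the Markov form $A_t = \hat d_t(\Theta_t)$, characterized by the stated recursion for the value functions $V_t$. Finally, to close the loop one checks that a strategy of this restricted form is admissible in the original POMDP: since $\Theta_t$ is itself a deterministic function of $(O_{1:t},A_{1:t-1})$, choosing $A_t = \hat d_t(\Theta_t)$ is a legitimate choice of the form $A_t = d_t(O_{1:t},A_{1:t-1})$.

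The main obstacle is part~1 --- the simultaneous verification that $\Theta_t$ is a sufficient statistic for the future cost and that both $\eta_t$ and the predictive law of $O_{t+1}$ are functions of $(\Theta_t,A_t)$ only. The conditional-independence manipulations (exploiting that $A_t$ is $\sigma(O_{1:t},A_{1:t-1})$-measurable together with the one-step transition rule) must be carried out with care. Once part~1 is in hand, part~2 is a routine invocation of standard results for fully observed MDPs --- the only technical point being the continuity/measurability of the value functions, which follows from the finiteness of the observation and action sets and the linearity of $\hat\COST$ in $\theta$; I would simply cite the MDP literature, as in \cite{Whittle:1983}.
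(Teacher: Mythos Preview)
Your proposal is correct and follows the standard route for establishing the POMDP structural result: derive the belief recursion via Bayes' rule, verify that the belief is a controlled Markov process with the predictive observation law depending only on $(\Theta_t,A_t)$, rewrite the expected cost in terms of $\hat\COST(\Theta_t,A_t)$, and then invoke the finite-horizon MDP dynamic programming theorem on the belief-state process.

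The paper, however, does not prove this theorem at all. It introduces Theorem~\ref{thm:pomdp} as a ``well-known result'' and simply refers the reader to \cite{Whittle:1983} for details. So there is nothing to compare on the level of argument: you have supplied a (correct) proof sketch where the paper supplies only a citation. Your sketch is essentially the argument one would find in the cited reference, so in spirit there is no divergence --- you are just filling in what the paper chose to omit.
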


We will now show that the coordinated system can be viewed as an instance of the
above POMDP model by defining the state process as  \(S_t := \{X_t, \VEC Y_t,
\VEC M_t \},\) the observation process as \(O_{t} := \VEC Z_{t-1},\) and the
action process \(A_t := \VEC \Gamma_t.\)

\begin{lemma} \label{lemma:state}
  For the coordinated system of Problem \ref{prob:coordinator},
  \begin{enumerate}
    \item   There exist functions $\tilde f_t$ and $\tilde h_t$, $t=1,\dots,T$,
      such that
      \begin{gather} \label{eq:coord_state}
        S_{t+1} = \tilde f_t(S_t, \VEC \Gamma_t, W^0_t,  \VEC W_{t+1}), \\
        \shortintertext{and}
        \VEC Z_{t} = \tilde h_{t}(S_t, \VEC \Gamma_t).
      \end{gather}
      In particular, we have that 
      \begin{equation} \label{eq:markov prop}
        \mathds{P}(S_{t+1},\VEC Z_t|S_{1:t},\VEC Z_{1:t-1},\VEC \Gamma_{1:t}) = \mathds{P}(S_{t+1}, \VEC Z_{t}|S_t,\VEC \Gamma_t).
      \end{equation}

    \item  Furthermore, there exists a function $\tilde \COST$ such that
      \begin{equation}
        \COST(X_t, \VEC U_t) = 
        \tilde \COST(S_t, \VEC \Gamma_t).
      \end{equation}
      Thus, the objective of
      minimizing~\eqref{eq:coordination-cost} is same as minimizing
      \begin{equation}
        \label{eq:coordination-cost-2}
        \hat J(\VEC d)  = \EXP\Big[ \sum_{t=1}^T \tilde \COST(S_t, \VEC \Gamma_t) \Big].
      \end{equation}

  \end{enumerate}
\end{lemma}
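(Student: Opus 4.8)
The plan is to verify the two functional relations in part~(1) and the cost relation in part~(2) directly from the defining equations of the model, and then read off the Markov property \eqref{eq:markov prop} as a consequence; the whole argument is a matter of tracking which variables are deterministic functions of which. I would begin with part~(2) and with the measurement relation $\VEC Z_t = \tilde h_t(S_t,\VEC\Gamma_t)$, since these involve only quantities at time~$t$. By \eqref{eq:prescription}, $U^i_t = \Gamma^i_t(Y^i_t,M^i_t)$, so $\VEC U_t$ is a fixed deterministic function of $(\VEC Y_t,\VEC M_t,\VEC\Gamma_t)$ and hence of $(S_t,\VEC\Gamma_t)$; composing with $\COST$ and using $X_t \subset S_t$ yields the function $\tilde\COST$ of part~(2). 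Likewise, the pre-specified protocol that forms $Z^i_t$ from $\{M^i_t,Y^i_t,U^i_t\}$ makes $Z^i_t$ a fixed function of $(M^i_t,Y^i_t,U^i_t)$; substituting $U^i_t = \Gamma^i_t(Y^i_t,M^i_t)$ shows $\VEC Z_t$ is a fixed function of $(S_t,\VEC\Gamma_t)$, which defines $\tilde h_t$.

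Next I would construct $\tilde f_t$ by chaining the one-step recursions. From \eqref{eq:state}, $X_{t+1} = f_t(X_t,\VEC U_t,W^0_t)$; since $\VEC U_t$ is a function of $(S_t,\VEC\Gamma_t)$ and $X_t \subset S_t$, this makes $X_{t+1}$ a function of $(S_t,\VEC\Gamma_t,W^0_t)$. Feeding $X_{t+1}$ into \eqref{eq:observation} gives $Y^i_{t+1} = h^i_{t+1}(X_{t+1},W^i_{t+1})$, so $\VEC Y_{t+1}$ is a function of $(S_t,\VEC\Gamma_t,W^0_t,\VEC W_{t+1})$. Finally, \eqref{eq:local update} says $M^i_{t+1}$ is obtained from $\{M^i_t,Y^i_t,U^i_t\}\setminus Z^i_t$ by a pre-specified protocol, hence is a fixed function of $(M^i_t,Y^i_t,U^i_t)$ and thus of $(S_t,\VEC\Gamma_t)$. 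Assembling these three pieces exhibits $S_{t+1} = \{X_{t+1},\VEC Y_{t+1},\VEC M_{t+1}\}$ as a function of $(S_t,\VEC\Gamma_t,W^0_t,\VEC W_{t+1})$, which is the desired $\tilde f_t$.

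The remaining step, which I expect to be the only one needing care, is deducing \eqref{eq:markov prop} from the two functional relations just established. The key point is that the ``future'' noise $(W^0_t,\VEC W_{t+1})$ is independent of the conditioning variables $(S_{1:t},\VEC Z_{1:t-1},\VEC\Gamma_{1:t})$. I would establish this by a routine induction showing that $S_{1:t}$, $\VEC Z_{1:t-1}$ and $\VEC\Gamma_{1:t}$ are deterministic functions of the primitive variables $\{X_1, W^0_{1:t-1}, \VEC W_{1:t}\}$: for $\tau\le t$, $X_\tau$ and $\VEC Y_\tau$ depend only on $X_1,W^0_{1:\tau-1},\VEC W_{1:\tau}$ by \eqref{eq:state}--\eqref{eq:observation}; $\VEC M_\tau$ and $\VEC Z_\tau$ are built from these by the memory protocols together with \eqref{eq:prescription}; $C_\tau = \{\VEC Z_{1:\tau-1}\}$ follows from \eqref{eq:shared_update} with $C_1=\emptyset$; and $\VEC\Gamma_\tau = d_\tau(C_\tau,\VEC\Gamma_{1:\tau-1})$ by \eqref{eq:coordinator}. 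Since the primitive random variables are mutually independent, $(W^0_t,\VEC W_{t+1})$ is independent of $\{X_1,W^0_{1:t-1},\VEC W_{1:t}\}$ and hence of $(S_{1:t},\VEC Z_{1:t-1},\VEC\Gamma_{1:t})$. Because $\VEC Z_t = \tilde h_t(S_t,\VEC\Gamma_t)$ is determined by $(S_t,\VEC\Gamma_t)$ and $S_{t+1} = \tilde f_t(S_t,\VEC\Gamma_t,W^0_t,\VEC W_{t+1})$ with the noise independent of the past, the conditional law of $(S_{t+1},\VEC Z_t)$ given $(S_{1:t},\VEC Z_{1:t-1},\VEC\Gamma_{1:t})$ depends on the conditioning only through $(S_t,\VEC\Gamma_t)$, which is precisely \eqref{eq:markov prop}. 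The one thing to watch throughout is keeping the noise time-indices disjoint so that this independence is genuine --- pure bookkeeping rather than a real obstacle.
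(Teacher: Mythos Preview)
Your proposal is correct and follows the same approach as the paper's proof, which simply cites the relevant defining equations \eqref{eq:state}, \eqref{eq:observation}, \eqref{eq:prescription}, \eqref{eq:local update}, the protocol for $Z^i_t$, and the independence of $(W^0_t,\VEC W_{t+1})$ from the conditioning variables. Your writeup is more detailed (particularly the inductive check that the conditioning variables are functions of the earlier primitive randomness), but the underlying argument is identical.
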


\begin{proof}
  The existence of $\tilde f_t$ follows from~\eqref{eq:state},
  \eqref{eq:observation}, \eqref{eq:prescription}, \eqref{eq:local update} and
  the definition of $S_t$. The existence of $\tilde h_t$ follows from the fact
  that $Z^i_t$ is a fixed subset of $\{M^i_t,Y^i_t,U^i_t\}$,
  equation~\eqref{eq:prescription} and the definition of $S_t$. Equation
  \eqref{eq:markov prop} follows from \eqref{eq:coord_state} and the
  independence of $W^0_t, \VEC W_{t+1}$ from all random variables in the
  conditioning in the left hand side of \eqref{eq:markov prop}. The existence of
  $\tilde \COST$ follows from the definition of $S_t$
  and~\eqref{eq:prescription}.
\end{proof}

Recall that the coordinator is choosing its actions according to a coordination
strategy of the form
\begin{equation} \label{eq:coordinator2}
  \mathbf{\Gamma_t} = d_t(C_t,\mathbf{\Gamma_{1:t-1}}) = d_t(\VEC Z_{1:t-1},\mathbf{\Gamma_{1:t-1}}).
\end{equation}
Equation \eqref{eq:coordinator2} and Lemma~\ref{lemma:state} imply that the
coordinated system is an instance of the POMDP model described  above. 
%where the state of the unobserved system is
%$S_t$, the observations of the decision-maker is $\VEC Z_t$ and the control action of the
%controller is $\VEC \Gamma_t$. The instantaneous cost at time $t$ is $\tilde
%c_t(S_t, \VEC \Gamma_t)$. 

\subsection* {\textbf{Stage 3: Structural result and dynamic program for the coordinated system}}

Since the coordinated system is a POMDP, Theorem~\ref{thm:pomdp} gives the
structure of the optimal coordination strategies. For that matter, define
coordinator's information state 
\begin{equation}
  \Pi_t := \PR(S_t \mid \VEC Z_{1:t-1},
  \VEC \Gamma_{1:t-1}) = \PR(S_t \mid C_t,
  \VEC \Gamma_{1:t-1}) .
\end{equation}
Then, we have the following:
\begin{proposition} \label{prop:structure}
  For Problem~\ref{prob:coordinator}, there is no loss of optimality in
  restricting attention to coordination rules of the form
  \begin{equation}
    \VEC \Gamma_t = \hat{d}_t(\Pi_t).
  \end{equation}
\end{proposition}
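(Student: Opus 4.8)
The plan is to obtain the statement as an immediate corollary of the POMDP result (Theorem~\ref{thm:pomdp}), using the identification carried out in Stage~2. Recall that there the coordinated system of Problem~\ref{prob:coordinator} was shown to be an instance of the POMDP model with state $S_t = \{X_t, \VEC Y_t, \VEC M_t\}$, observation $O_t = \VEC Z_{t-1}$, and action $A_t = \VEC \Gamma_t$: Lemma~\ref{lemma:state} supplies the controlled-Markov transition law~\eqref{eq:markov prop} and the cost representation~\eqref{eq:coordination-cost-2}, while~\eqref{eq:coordinator2} shows that the coordination rule $d_t$ depends on the past only through $(\VEC Z_{1:t-1}, \VEC \Gamma_{1:t-1})$, which is precisely the observation--action history $(O_{1:t}, A_{1:t-1})$ of the POMDP (here one uses $C_t = \VEC Z_{1:t-1}$, which follows from $C_1 = \emptyset$ and~\eqref{eq:shared_update}).

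Given this, the first step is to recognize that the coordinator's information state
\[
  \Pi_t = \PR(S_t \mid \VEC Z_{1:t-1}, \VEC \Gamma_{1:t-1})
\]
is exactly the POMDP belief $\Theta_t = \PR(S_t \mid O_{1:t}, A_{1:t-1})$ appearing in Theorem~\ref{thm:pomdp}: both are the conditional distribution of the same random vector $S_t$ given the same conditioning variables. The second step is then purely an invocation of Theorem~\ref{thm:pomdp}(2), which asserts the existence of an optimal decision strategy of the form $A_t = \hat d_t(\Theta_t)$; translating back to the coordinated system, this is an optimal coordination strategy $\VEC \Gamma_t = \hat d_t(\Pi_t)$, which is the claim. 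For completeness one may also note, via Theorem~\ref{thm:pomdp}(1), that $\Pi_{t+1} = \eta_t(\Pi_t, \VEC \Gamma_t, \VEC Z_t)$ for the standard nonlinear filter $\eta_t$, although this recursion is not needed for the present statement and will only be used later.

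The only point that requires a little care is the history bookkeeping behind the identification $\Theta_t = \Pi_t$: one must check that nothing is lost or gained in passing between $(C_t, \VEC \Gamma_{1:t-1})$ and $(O_{1:t}, A_{1:t-1})$. Since $A_{1:t-1} = \VEC \Gamma_{1:t-1}$ trivially, the substantive check is that $O_{1:t}$ and $C_t$ generate the same information: for $t \geq 2$ we have $O_{2:t} = \VEC Z_{1:t-1} = C_t$, and the initial observation $O_1$ carries no information beyond the fixed prior on $S_1$ determined by $Q_1$, the observation functions $h^i_1$, and $\VEC M_1 = \emptyset$. Once this matching is made explicit, the proposition follows with no further computation; I do not anticipate any genuine obstacle, since the substantive work (the Markov and cost structure of the coordinated system) was already done in Lemma~\ref{lemma:state} and Stage~2.
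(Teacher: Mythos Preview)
Your proposal is correct and is essentially the paper's own argument: the paper simply remarks that, since Stage~2 (via Lemma~\ref{lemma:state} and~\eqref{eq:coordinator2}) shows the coordinated system is a POMDP with belief $\Pi_t$, Proposition~\ref{prop:structure} is an immediate consequence of Theorem~\ref{thm:pomdp}. Your extra care in matching the histories $(O_{1:t},A_{1:t-1})$ with $(C_t,\VEC\Gamma_{1:t-1})$ is a harmless elaboration of what the paper leaves implicit.
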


Furthermore, an optimal coordination strategy of the above form can be found using
a dynamic program. For that matter, observe that we can write
\begin{equation} \label{eq:updatecoordinator}
  \Pi_{t+1} = \eta_t(\Pi_t, \VEC Z_{t}, \VEC \Gamma_t)
\end{equation}
where $\eta_t$ is the standard non-linear filtering update function (see
Appendix~\ref{sec:update_func}). We denote by $\mathcal{B}_t$ the space of
possible realizations of $\Pi_t$. Thus,
\begin{equation} \label{eq:beliefspace}
  \mathcal{B}_t := \Delta(\mathcal{X}_t \times \mathcal{Y}^1_t \times \mathcal{M}^1_t \times \ldots \times \mathcal{Y}^n_t \times \mathcal{M}^n_t).
\end{equation} 
Recall that $F(\mathcal{Y}^i_t \times \mathcal{M}^i_t,\mathcal{U}^i_t)$ is the
set of all functions from $\mathcal{Y}^i_t \times \mathcal{M}^i_t$ to
$\mathcal{U}^i_t$ (see Section \ref{sec:notation}). Then, we have the following
result.

\begin{proposition} \label{prop:DP}
  For all $\pi_t$ in $\mathcal{B}_t$, define
  \begin{equation}
    V_{T}(\pi) = \inf _{\{\tilde\gamma^i_T \in F(\mathcal{Y}^i_T \times \mathcal{M}^i_T,\mathcal{U}^i_T), 1\leq i \leq n\}} 
    \EXP[ \tilde \COST(S_t, \VEC \Gamma_T) \mid 
      \Pi_t = \pi, \VEC \Gamma_T = (\gamma^1_T,\ldots,\gamma^n_T) ],
    \end{equation}
    and for $1 \leq t \leq T-1$,
    \begin{multline}
      V_{t}(\pi) = \inf _{\{\tilde\gamma^i \in F(\mathcal{Y}^i_t \times \mathcal{M}^i_t,\mathcal{U}^i_t), 1\leq i\leq n\}} 
      \EXP[ \tilde \COST(S_t, \VEC \Gamma_t) + V_{t+1}(\eta_t(\Pi_t, \VEC \Gamma_t, \VEC Z_t) \mid 
        \Pi_t = \pi, \VEC \Gamma_t = (\gamma^1_t,\ldots,\gamma^n_t)].
      \end{multline}

%  \begin{align}
%    V_T(\pi_T) &= \inf_{\gamma^{1:n}_T \in \FIELD G_T} 
%    \EXP[ \tilde c(S_t, \gamma^{1:n}_T) \mid 
%         \Pi_t = \pi, \Gamma^{1:n}_T = \gamma^{1:n}_T ]
%   \intertext{and for $t={T-1},{T-2}, \dots,1$}
%    V_t(\pi_t) &= \inf_{\gamma^{1:n}_t \in \FIELD G_t} 
%    \EXP[ \tilde c(S_t, \gamma^{1:n}_t) + V_{t+1}(F_t(\pi_t, \gamma^{1:n}_t, Z_t) \mid 
%         \Pi_t = \pi, \Gamma^{1:n}_t = \gamma^{1:n}_t ]
%  \end{align}
      Then the $\arg \inf$ at each time step gives the coordinator's optimal prescriptions for the controllers when the coordinator's information state is $\pi$.
\end{proposition}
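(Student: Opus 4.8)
The plan is to invoke Theorem~\ref{thm:pomdp} directly, since Stage~2 has already established that the coordinated system is an instance of the POMDP model with state $S_t = \{X_t, \VEC Y_t, \VEC M_t\}$, observation $O_t = \VEC Z_{t-1}$, action $A_t = \VEC\Gamma_t$, and cost $\tilde\COST(S_t,\VEC\Gamma_t)$. First I would note that by construction $\Pi_t = \PR(S_t \mid \VEC Z_{1:t-1}, \VEC\Gamma_{1:t-1})$ is exactly the belief state $\Theta_t$ of that POMDP, so part~(1) of Theorem~\ref{thm:pomdp} gives the filtering recursion \eqref{eq:updatecoordinator}, and part~(2) gives both the structural claim (already recorded as Proposition~\ref{prop:structure}) and the dynamic program with value functions $V_T(\pi) = \inf_a \EXP[\tilde\COST(S_T,a)\mid\Theta_T=\pi]$ and $V_t(\pi)=\inf_a\EXP[\tilde\COST(S_t,a)+V_{t+1}(\eta_t(\pi,a,O_{t+1}))\mid\Theta_t=\pi,A_t=a]$.

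The one substantive point is that the action space in our POMDP is not a generic finite set but the set of prescription tuples $\VEC\Gamma_t = (\Gamma^1_t,\ldots,\Gamma^n_t)$ with $\Gamma^i_t \in F(\mathcal{Y}^i_t\times\mathcal{M}^i_t, \mathcal{U}^i_t)$. So the second step is simply to substitute $\mathcal{A} = \prod_{i=1}^n F(\mathcal{Y}^i_t\times\mathcal{M}^i_t,\mathcal{U}^i_t)$ into the abstract dynamic program; the $\inf_a$ over the abstract action set becomes the displayed joint infimum over $\{\gamma^i_t \in F(\mathcal{Y}^i_t\times\mathcal{M}^i_t,\mathcal{U}^i_t),\ 1\le i\le n\}$, and $\eta_t(\theta,a,o_{t+1})$ becomes $\eta_t(\Pi_t,\VEC\Gamma_t,\VEC Z_t)$ after matching the order of arguments to the filter in Appendix~\ref{sec:update_func}. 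Since all alphabets are finite by assumption, each such action set is finite, so the infima are attained and the $\arg\inf$ is well-defined; this justifies the concluding sentence that the minimizer at each stage is the coordinator's optimal prescription tuple at information state $\pi$.

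The only mild obstacle is bookkeeping rather than mathematics: one must check that the conditional expectations in the displayed recursions are well-posed, i.e.\ that $\EXP[\tilde\COST(S_t,\VEC\Gamma_t)\mid\Pi_t=\pi,\VEC\Gamma_t=\gamma]$ depends only on $(\pi,\gamma)$ and not on the past coordination rule. This follows because $\tilde\COST(S_t,\VEC\Gamma_t)$ is a deterministic function of $(S_t,\VEC\Gamma_t)$ (Lemma~\ref{lemma:state}(2)) and $\Pi_t$ by definition is the conditional law of $S_t$ given the coordinator's information, so the conditional expectation is just $\sum_{s}\pi(s)\,\tilde\COST(s,\gamma)$; likewise the term involving $V_{t+1}$ unfolds via the transition kernel $\PR(S_{t+1},\VEC Z_t\mid S_t,\VEC\Gamma_t)$ from \eqref{eq:markov prop} and the filter $\eta_t$. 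Hence no genuine new argument is needed beyond identifying the coordinated system with the POMDP and reading off Theorem~\ref{thm:pomdp}.
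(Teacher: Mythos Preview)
Your proposal is correct and matches the paper's approach exactly: the paper does not give a separate proof of Proposition~\ref{prop:DP} but simply presents it as an immediate consequence of Theorem~\ref{thm:pomdp} applied to the POMDP identified in Stage~2 via Lemma~\ref{lemma:state}. Your additional bookkeeping (identifying the action space with the product of prescription spaces and verifying that the conditional expectations depend only on $(\pi,\gamma)$) is more explicit than what the paper writes, but it is precisely the content implicit in that reduction.
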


Proposition~\ref{prop:DP} gives a dynamic program for the coordinator's problem
(Problem~\ref{prob:coordinator}). Since the coordinated system is a POMDP, it
implies that computational algorithms for POMDPs can be used to solve the
dynamic program for the coordinator's problem as well. We refer the reader to
\cite{Zhang:2009} and references therein for a review of algorithms to solve POMDPs. 

%We discuss more on this point in Section~\ref{sec:conclusion}.

\subsection*{\textbf{Stage 4: Equivalence between the two models}}

We first observe that since $C_s \subset C_t$, for all $s < t$,  under any given
coordination strategy $\VEC d$, we can use $C_t$ to evaluate the past
prescriptions by recursive substitution.
For example, for $t=2,3$, the past prescriptions can be evaluated as functions
of $C_2$, $C_3$ as follows:
\[\VEC \Gamma_1 = d_1(C_1)  =: \tilde d_1(C_2), \] 
\[\VEC \Gamma_2 = d_2(C_2, \VEC \Gamma_1) = d_2(C_2,\tilde d_1(C_2)) =: \tilde d_2(C_3)\] 
We can now state the following result.

\begin{proposition} \label{prop:equiv}
  The basic model of Section \ref{sec:model} and the coordinated system are equivalent. More
  precisely:
  \begin{enumerate}
    \item[(a)] Given any control strategy $\VEC g^{1:n}$ for the basic model,
      choose a coordination strategy $\VEC d$ for the coordinated system of
      stage~1 as
      \[ d_t(C_t) =
          \big(g^1_t( \cdot, \cdot, C_t), \dots, g^n_t( \cdot, \cdot, C_t)
\big). \]
      Then $\hat J(\VEC d) = J(\VEC g^{1:n})$.

    \item[(b)] Conversely, for any coordination strategy for the coordinated
      system, choose a control strategy $\VEC g^{1:n}$ for the basic model as
      \begin{gather*}
        g^i_1(\cdot,\cdot, C_1) = d^i_1(C_1), \\
      \shortintertext{and}
        g^i_t(\cdot,\cdot, C_t) = d^i_t(C_t, \VEC \Gamma_{1:t-1}), 
      \end{gather*}
      where $\VEC \Gamma_k = d_k(C_k,\VEC \Gamma_{1:k-1})$, $k=1,2,\ldots,t-1$
      and $d^i_t(\cdot)$ is the $i$-th component of $d_t(\cdot)$ (that is,
      $d^i_t(\cdot)$ gives the coordinator's prescription for the $i$-th
      controller). Then, $J(\VEC g^{1:n}) = \hat J(\VEC d)$.
  \end{enumerate}
\end{proposition}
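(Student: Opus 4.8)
The plan is to establish the two directions of equivalence by showing that the joint law of $(X_{1:T}, \VEC U_{1:T})$ induced in one system under the given strategy coincides with the joint law induced in the other system under the constructed strategy; since the cost functional $J$ (resp. $\hat J$) depends on the strategy only through this joint law, equality of the laws immediately yields equality of the expected costs. The natural tool is a forward induction on $t$, carried out along the sample paths of the primitive random variables $\{X_1, W^j_t : j = 0,1,\dots,n,\ t=1,\dots,T\}$, which are the same in both systems. It suffices to show that, for each realization of the primitive variables, the trajectories $(X_t, \VEC Y_t, \VEC M_t, C_t, \VEC U_t)$ generated in the two systems are identical; then in particular the cost-relevant variables $(X_t, \VEC U_t)$ agree path-by-path, so the expectations agree.

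For part~(a), fix a control strategy $\VEC g^{1:n}$ and define $\VEC d$ as stated. The induction hypothesis at time $t$ is that the states, observations, local memories, and shared memory agree in the two systems, and moreover that the prescriptions $\VEC \Gamma_{1:t-1}$ chosen by the coordinator are exactly $\big(g^1_k(\cdot,\cdot,C_k),\dots,g^n_k(\cdot,\cdot,C_k)\big)$ for $k \le t-1$. Given this, in the coordinated system controller~$i$ produces $U^i_t = \Gamma^i_t(Y^i_t, M^i_t)$ where $\Gamma^i_t = g^i_t(\cdot,\cdot,C_t)$ by construction of $d_t$, so $U^i_t = g^i_t(Y^i_t, M^i_t, C_t)$, which is exactly the action the basic model prescribes; hence $\VEC U_t$ agrees. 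Then \eqref{eq:state} gives the same $X_{t+1}$, \eqref{eq:observation} the same $\VEC Y_{t+1}$, and the fixed memory-update protocols \eqref{eq:shared_update}–\eqref{eq:local update} (which are common to both systems and depend only on quantities already shown to agree) give the same $\VEC M_{t+1}$ and $C_{t+1}$; closing the induction. The base case $t=1$ holds since $M^i_1 = C_1 = \emptyset$ and $X_1$ is a common primitive variable. One subtlety to spell out: in the coordinated system $d_t$ takes $(C_t, \VEC\Gamma_{1:t-1})$ as argument, but the prescription $g^i_t(\cdot,\cdot,C_t)$ we assign ignores $\VEC\Gamma_{1:t-1}$, which is legitimate since a coordination rule is allowed to depend on a subset of its arguments.

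For part~(b), fix a coordination strategy $\VEC d$ and define $\VEC g^{1:n}$ as stated, using the recursive-substitution observation (that $\VEC\Gamma_{1:t-1}$ can be recovered from $C_t$ alone, since $C_s \subset C_t$ for $s<t$) to make $g^i_t$ a well-defined function of $(Y^i_t, M^i_t, C_t)$ — indeed $g^i_t(y,m,C_t) = d^i_t(C_t, \tilde d_1(C_t),\dots,\tilde d_{t-1}(C_t))(y,m)$ where the $\tilde d_k$ are the composed rules. The induction is then symmetric to part~(a): assuming the trajectories agree through time $t$ and that the basic-model controller's embedded prescription equals the coordinator's $\VEC\Gamma_t$, one checks $U^i_t = g^i_t(Y^i_t,M^i_t,C_t) = \Gamma^i_t(Y^i_t,M^i_t)$, and then the common dynamics and common memory protocols propagate agreement to $t+1$. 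I expect the only real work — and the step most prone to notational slippage rather than genuine difficulty — to be verifying that the recursive substitution in part~(b) is consistent, i.e., that the $\VEC\Gamma_{1:t-1}$ reconstructed from $C_t$ inside $g^i_t$ coincides path-by-path with the $\VEC\Gamma_{1:t-1}$ actually generated along the trajectory; this follows because both are obtained by the same recursion $\VEC\Gamma_k = d_k(C_k,\VEC\Gamma_{1:k-1})$ from the same $C_{1:t-1}$, but it deserves an explicit induction. Everything else is a routine unwinding of definitions, and no measure-theoretic issues arise since all alphabets are finite.
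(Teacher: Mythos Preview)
Your proposal is correct; the path-by-path forward-induction argument---fixing a realization of the primitive random variables and showing that the trajectories $(X_t, \VEC Y_t, \VEC M_t, C_t, \VEC U_t)$ coincide in the two systems, with the one genuinely nontrivial point being the consistency check in part~(b) that the recursively reconstructed $\VEC\Gamma_{1:t-1}$ from $C_t$ matches the prescriptions actually generated along the trajectory---is exactly the natural way to prove this equivalence. The paper defers its proof to Appendix~\ref{sec:equiv_proof}, whose source is not included here, but given the elementary nature of the result this is the standard argument and almost certainly what appears there.
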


\begin{proof}
See Appendix \ref{sec:equiv_proof}.
\end{proof}

\subsection*{\textbf{Stage 5: Structural result and dynamic program for the basic model}}

Combining Proposition~\ref{prop:structure}  with Proposition~\ref{prop:equiv},
we get the following structural result for Problem~\ref{prob:finite}.

\begin{theorem}[Structural Result for Optimal Control Strategies] \label{thm:str_result}
  In Problem~\ref{prob:finite},  there exist optimal  control strategies of the form
  \begin{equation} \label{eq:our_result}
    U^i_t = \hat{g}^i_t(Y^i_t, M^i_t, \Pi_t), \quad i= 1,2,\ldots,n,
  \end{equation}
  where $\Pi_t$ is the conditional distribution on $X_t, \VEC Y_t, \VEC M_t$ given $C_t$, defined as  
  \begin{equation} \label{eq:define_pi}
    \Pi_t(x, \VEC y, \VEC m) :=    \mathds{P}^{\hat{g}^{1:n}_{1:t-1}}(X_{t}=x,
    \VEC Y_t=\VEC y,\VEC M_t=\VEC m | C_t),
  \end{equation}    
  for all possible realizations $(x, \VEC y, \VEC m)$ of ($X_t, \VEC Y_t,\VEC M_t$).
\end{theorem}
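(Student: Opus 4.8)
The plan is to obtain Theorem~\ref{thm:str_result} as a direct corollary of the structural result for the coordinated system (Proposition~\ref{prop:structure}) together with the equivalence of the two models (Proposition~\ref{prop:equiv}); the real work is in threading these two statements together and in identifying the information state $\Pi_t$ consistently on both sides.

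First I would record that an optimal coordination strategy of the structural form $\VEC \Gamma_t = \hat d_t(\Pi_t)$ exists: Proposition~\ref{prop:structure} supplies the ``no loss of optimality'' statement, and since all state, observation, memory and action spaces are finite and $T$ is finite, there are only finitely many such rules, so an optimal one can be selected; call it $\hat{\VEC d}$. I would then observe, using the filter recursion~\eqref{eq:updatecoordinator}, $\Pi_{t+1} = \eta_t(\Pi_t, \VEC Z_t, \VEC \Gamma_t)$, started from the fixed prior $\Pi_1$, together with the recursive-substitution argument of Stage~4 (which reconstructs $\VEC \Gamma_{1:t-1}$ from $C_t$ under a fixed coordination strategy), that along trajectories of $\hat{\VEC d}$ the belief $\Pi_t$ is a deterministic function of $C_t$ alone. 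Hence $\VEC \Gamma_t = \hat d_t(\Pi_t)$ is a legitimate strategy of the form~\eqref{eq:coordinator}.

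Next I would push $\hat{\VEC d}$ back to the basic model using Proposition~\ref{prop:equiv}(b): set $\hat g^i_t(\cdot, \cdot, C_t) := \hat d^i_t(\Pi_t)$, where $\hat d^i_t$ is the $i$-th component of $\hat d_t$ and $\Pi_t$ is read off from $C_t$ as above. Since $\hat d^i_t(\Pi_t)$ is by construction a map from $\mathcal Y^i_t \times \mathcal M^i_t$ into $\mathcal U^i_t$, applying it to controller~$i$'s local data gives $U^i_t = (\hat d^i_t(\Pi_t))(Y^i_t, M^i_t)$; defining $\hat g^i_t(y, m, \pi) := (\hat d^i_t(\pi))(y, m)$ yields precisely~\eqref{eq:our_result}. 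Proposition~\ref{prop:equiv} then gives $J(\hat g^{1:n}) = \hat J(\hat{\VEC d})$, and since part~(a) of the same proposition shows every control strategy for the basic model is matched by a coordination strategy of no greater cost, the optimality of $\hat{\VEC d}$ for Problem~\ref{prob:coordinator} forces $\hat g^{1:n}$ to be optimal for Problem~\ref{prob:finite}.

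The hard part will be checking that the $\Pi_t$ carried over in this way really is the conditional distribution~\eqref{eq:define_pi} computed in the basic model under $\hat g^{1:n}$, rather than merely some statistic inherited from the coordinator. For this I would argue that the correspondence of Proposition~\ref{prop:equiv} preserves not just the expected cost but the entire joint law of $(X_{1:T}, \VEC Y_{1:T}, \VEC M_{1:T}, \VEC U_{1:T}, C_{1:T})$: under the correspondence the two systems generate each $\VEC U_t$ by the same functional applied to the same (identically distributed) data, so an induction on $t$ shows the induced measures coincide. Given this, $\PR^{\hat{\VEC d}}(S_t \mid C_t, \VEC \Gamma_{1:t-1})$ --- the way $\Pi_t$ enters Proposition~\ref{prop:structure} --- equals $\mathds{P}^{\hat g^{1:n}_{1:t-1}}(X_t, \VEC Y_t, \VEC M_t \mid C_t)$, because $S_t = \{X_t, \VEC Y_t, \VEC M_t\}$ and $\VEC \Gamma_{1:t-1}$ is $C_t$-measurable under $\hat{\VEC d}$, so the extra conditioning variables are redundant. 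Everything else --- existence of the minimizer, unwinding the definition of a prescription, and the cost equality --- is bookkeeping already provided by the earlier stages.
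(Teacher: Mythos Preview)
Your proposal is correct and follows exactly the paper's approach: Theorem~\ref{thm:str_result} is stated in the paper as a direct consequence of combining Proposition~\ref{prop:structure} with Proposition~\ref{prop:equiv}, and you have simply spelled out, with appropriate care, how that combination works (the $C_t$-measurability of $\VEC\Gamma_{1:t-1}$, the identification of $\Pi_t$ across the two systems, and the definition $\hat g^i_t(y,m,\pi) := (\hat d^i_t(\pi))(y,m)$). If anything, your write-up is more explicit than the paper's one-line justification.
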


We call $\Pi_t$ the \emph{common information state}. Recall that $\Pi_t$ takes
values in the set $\mathcal{B}_t$ defined in \eqref{eq:beliefspace}.
 
Consider a control strategy $\mathbf{\hat{g}^i}$ for controller $i$ of the form
specified in Theorem~\ref{thm:str_result}. The control law $\hat{g}^i_t$ at time
$t$ is a function from the space $\mathcal{Y}^i_t \times \mathcal{M}^i_t \times
\mathcal{B}_t$  to the space of decisions $\mathcal{U}^i_t$. Equivalently, the
control law $\hat{g}^i_t$ can be represented as a collection of functions
$\{\hat{g}^i_t(\cdot,\cdot,\pi)\}_{\pi \in \mathcal{B}_t}$, where each element
of this collection is a function from $\mathcal{Y}^i_t \times \mathcal{M}^i_t $
to  $\mathcal{U}^i_t$. An element $\hat{g}^i_t(\cdot, \cdot,\pi)$ of this
collection specifies a control action for each possible realization of $Y^i_t,
M^i_t$ and a fixed realization $\pi$ of $\Pi_t$. We call
$\hat{g}^i_t(\cdot,\cdot,\pi)$ the \emph{partial control law} of controller $i$
at time $t$ for the given realization $\pi$ of the common information state
$\Pi_t$.

We now use Proposition \ref{prop:DP} to describe a dynamic programming decomposition of
the problem of finding optimal control strategies. This dynamic programming decomposition
allows us to evaluate optimal \emph{partial control laws} for each realization
$\pi$ of the common information state in a backward inductive manner. Recall
that $\mathcal{B}_t$ is the space of all possible realizations of $\Pi_t$ (see
\eqref{eq:beliefspace}) and $F(\mathcal{Y}^i_t \times
\mathcal{M}^i_t,\mathcal{U}^i_t)$ is the set of all functions from
$\mathcal{Y}^i_t \times \mathcal{M}^i_t$ to $\mathcal{U}^i_t$ (see Section
\ref{sec:notation}).

\begin{theorem}[Dynamic Programming Decomposition]\label{thm:seq_decomposition}
  Define the functions $V_{t} :
  \mathcal{B}_t \mapsto \reals$ , for $t=1,\dots,T$ as follows:
  \begin{equation}
    V_{T}(\pi) = \inf _{\{\tilde\gamma^i_T \in F(\mathcal{Y}^i_T \times \mathcal{M}^i_T,\mathcal{U}^i_T), 1\leq i \leq n\}} 
    \mathds{E}\{l(X_T,\tilde\gamma^1_T(Y^1_T,M^1_T),\ldots,\tilde\gamma^n_T(Y^n_T,M^n_T)) | 
    \Pi_T=\pi\},
  \end{equation}
  and for $1 \leq t \leq T-1$,
  \begin{multline}
    V_{t}(\pi) = \inf _{\{\tilde\gamma^i_t \in F(\mathcal{Y}^i_t \times \mathcal{M}^i_t,\mathcal{U}^i_t), 1\leq i\leq n\}} 
    \mathds{E}\big\{l(X_t,\tilde\gamma^1_t(Y^1_t,M^i_t),\ldots,\tilde\gamma^n_t(Y^n_t,M^n_t))  + \\ V_{t+1}(\eta_t(\pi,\tilde\gamma^1_t,\ldots,\tilde\gamma^n_t, \VEC Z_{t})) \,\big|\,
  \Pi_t=\pi \big\},
  \end{multline}
  % TOOK THIS OUT FROM THE EQ ABOVE, g^i_t(\cdot, \Pi_t) = \tilde\gamma^i, 1\leq i \leq n
  where $\eta_t$ is a $\mathcal{B}_{t+1}$-valued function  defined in
  \eqref{eq:updatecoordinator} and Appendix \ref{sec:update_func}. 
  %Let $\tilde\gamma^{*,i}_t$ be the the minimizing choice of $\tilde\gamma^i$
  %in the definition of $V_t(\pi)$, $1 \leq t \leq T$.

  For $t=1,\dots,T$ and for each $\pi \in \mathcal{B}_t$, an optimal partial
  control law for controller $i$ is the minimizing choice of $\tilde\gamma^i$ in
  the definition of $V_t(\pi)$.  Let $\Psi_t(\pi)$ denote the $\arg\inf$ of
  the right hand side of $V_t(\pi)$, and $\Psi^i_t$ denote its $i$-th
  component. Then, an optimal control stategy is given by:
  \begin{equation}
    \hat g^i_t(\cdot, \cdot, \pi) = \Psi^i_t(\pi).
  \end{equation}
%  Thus, an optimal control strategy can be
%  described in terms of the partial control laws as follows:
%  \begin{equation}
%    (\hat{g}^{i}_{T}(\cdot, \cdot,\pi), 1 \leq i \leq n) = \arginf _{\{\tilde\gamma^i_T \in F(\mathcal{Y}^i_T \times \mathcal{M}^i_T,\mathcal{U}^i_T), 1\leq i \leq n\}} 
%    \mathds{E}\{l(X_T,\tilde\gamma^1_T(Y^1_T,M^1_T),\ldots,\tilde\gamma^n_T(Y^n_T,M^n_T)) | 
%    \Pi_T=\pi\}
%  \end{equation}
%  and for $1 \leq t \leq T-1$,
%  \begin{multline}
%  (\hat{g}^{i}_t(\cdot,\cdot, \pi), 1 \leq i \leq n) = \arginf _{\{\tilde\gamma^i_t \in F(\mathcal{Y}^i_t \times \mathcal{M}^i_t,\mathcal{U}^i_t), 1\leq i\leq n\}} 
%  \mathds{E}\big\{l(X_t,\tilde\gamma^1_t(Y^1_t,M^i_t),\ldots,\tilde\gamma^n_t(Y^n_t,M^n_t))+ \\ V_{t+1}(\eta_t(\pi,\tilde\gamma^1_t,\ldots,\tilde\gamma^n_t, \VEC Z_{t})) \,\big|\,
%\Pi_t=\pi \big\}.
%  \end{multline}
% %, g^i_t(\cdot, \Pi_t) = \tilde\gamma^i, 1\leq i \leq n
\end{theorem}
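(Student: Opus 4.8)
The plan is to obtain this theorem as an essentially immediate consequence of the results already established for the coordinated system: Proposition~\ref{prop:DP} (the dynamic program at the coordinator), together with the equivalence Proposition~\ref{prop:equiv} and the structural result Theorem~\ref{thm:str_result}. Recall that in Stage~2 the coordinated system was identified as a POMDP with state $S_t=\{X_t,\VEC Y_t,\VEC M_t\}$, observation $O_t=\VEC Z_{t-1}$, and action $A_t=\VEC\Gamma_t$, and that the coordinator's information state $\Pi_t=\PR(S_t\mid C_t,\VEC\Gamma_{1:t-1})$ plays the role of $\Theta_t$ in Theorem~\ref{thm:pomdp}. Proposition~\ref{prop:DP} already writes the value functions $V_t$ and the optimal prescriptions in terms of the coordinated-system data $\tilde\COST(S_t,\VEC\Gamma_t)$ and $\eta_t(\Pi_t,\VEC\Gamma_t,\VEC Z_t)$, so the whole content of the present theorem is to re-express those quantities in the primitives of the basic model.

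First I would rewrite the running cost. By part~2 of Lemma~\ref{lemma:state}, $\tilde\COST(S_t,\VEC\Gamma_t)=l(X_t,\VEC U_t)$ with $U^i_t=\Gamma^i_t(Y^i_t,M^i_t)$; hence, when the realization of $\VEC\Gamma_t$ is the tuple of functions $(\tilde\gamma^1_t,\dots,\tilde\gamma^n_t)$,
\[
  \tilde\COST(S_t,\VEC\Gamma_t)=l\bigl(X_t,\tilde\gamma^1_t(Y^1_t,M^1_t),\dots,\tilde\gamma^n_t(Y^n_t,M^n_t)\bigr).
\]
Next, in the conditional expectation $\EXP[\,\cdot\mid\Pi_t=\pi,\VEC\Gamma_t=\gamma]$ of Proposition~\ref{prop:DP} I would argue that conditioning on $\VEC\Gamma_t=\gamma$ may be dropped once this substitution is made: $\Pi_t$ is by definition the conditional law of $S_t$ given $(C_t,\VEC\Gamma_{1:t-1})$, and $\VEC\Gamma_t$ is generated only after $S_t$ and influences only $\VEC Z_t$ and $S_{t+1}$ (via $\tilde h_t$, $\tilde f_t$ of Lemma~\ref{lemma:state}); therefore the conditional distribution of $S_t=\{X_t,\VEC Y_t,\VEC M_t\}$ given $\{\Pi_t=\pi,\VEC\Gamma_t=\gamma\}$ equals $\pi$, independently of $\gamma$, which yields exactly the expectations written in the statement. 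The filter $\eta_t$ is literally the same map in both places, only with its arguments reordered $(\Pi_t,\VEC\Gamma_t,\VEC Z_t)$ versus $(\pi,\tilde\gamma^1_t,\dots,\tilde\gamma^n_t,\VEC Z_t)$ (see \eqref{eq:updatecoordinator} and Appendix~\ref{sec:update_func}), so the value functions of Proposition~\ref{prop:DP} and of the present theorem coincide term by term.

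Finally I would close the loop. By Proposition~\ref{prop:DP}, for each $\pi\in\mathcal B_t$ the $\arg\inf$ in the recursion is the coordinator's optimal prescription $\VEC\Gamma_t=\hat d_t(\pi)$; writing $\Psi_t(\pi)$ for this $\arg\inf$ and $\Psi^i_t$ for its $i$-th component, the map $\hat g^i_t(\cdot,\cdot,\pi):=\Psi^i_t(\pi)$ is a control law of the structural form asserted in Theorem~\ref{thm:str_result}. Proposition~\ref{prop:equiv} guarantees that the corresponding coordination strategy and control strategy have equal total expected cost and that optimality transfers between the two models, so the strategy $\hat g^{1:n}$ assembled from these $\arg\inf$'s is optimal for Problem~\ref{prob:finite}, and each $\hat g^i_t(\cdot,\cdot,\pi)$ is an optimal partial control law. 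The one genuinely delicate step is the second one --- justifying that conditioning on $\VEC\Gamma_t=\gamma$ leaves the conditional law of $S_t$ unchanged, so that it can be removed after the functional substitution; the rest is bookkeeping that matches the coordinated-system objects to the basic-model primitives through Lemma~\ref{lemma:state} and Proposition~\ref{prop:equiv}.
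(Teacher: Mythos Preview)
Your proposal is correct and follows essentially the same approach as the paper: the paper presents Theorem~\ref{thm:seq_decomposition} as an immediate translation of Proposition~\ref{prop:DP} to the basic model via the equivalence of Proposition~\ref{prop:equiv}, and you have simply spelled out the bookkeeping (the substitution $\tilde\COST(S_t,\VEC\Gamma_t)=l(X_t,\tilde\gamma^1_t(Y^1_t,M^1_t),\dots)$ from Lemma~\ref{lemma:state} and the dropping of the now-redundant conditioning on $\VEC\Gamma_t$) that the paper leaves implicit. Your identification of the one nontrivial step---that the conditional law of $S_t$ given $\{\Pi_t=\pi,\VEC\Gamma_t=\gamma\}$ is just $\pi$---is accurate and is exactly what makes the two dynamic programs coincide term by term.
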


\subsection{Comparison with Person by Person and Designer Approaches}

The common information based approach adopted above differs from the
person-by-person approach and the designer's approach mentioned in the
introduction. In particular, the structural result of
Theorem~\ref{thm:str_result}  cannot be found by the person-by- person approach.
If we fix strategies of all but the $i$th controller to an arbitrary choice,
then it is \emph{not necessarily optimal} for controller $i$ to use a strategy
of the form in Theorem~\ref{thm:str_result}. This is because if controller $j$'s
strategy uses the entire common information $C_t$, then controller $i$, in
general, would need to consider the entire common information to better predict
controller $j$'s actions and hence controller $i$'s optimal choice of action may
too depend on the entire common information. The use of common information based
approach allowed us to prove that \emph{all controllers can jointly use
strategies of the form in Theorem~\ref{thm:str_result} without loss of
optimality}. 

The dynamic programming decomposition of Theorem~\ref{thm:seq_decomposition} is simpler
than any dynamic programming decomposition obtained using the designer's approach. As
described earlier, the designer's approach models the decentralized control
problem as an \emph{open-loop} centralized planning problem in which a designer
at each stage chooses control laws $g^i_t$ that map $(Y^i_t, M^i_t, C_t)$ to
$U^i_t$, $i=1,\dots,n$. On the other hand, the common-information approach
developed in this paper models the decentralized control problem as a
\emph{closed-loop} centralized planning problem in which a coordinator at each
stage chooses the \emph{partial} control laws $\gamma^i_t$ that map $(Y^i_t,
M^i_t)$ to $U^i_t$, $i=1,\dots,n$. The space of partial control laws is always
smaller than the space of full control laws; if the common information is
non-empty, then they are strictly smaller. Thus, the dynamic programming decomposition of
Theorem~\ref{thm:seq_decomposition} is simpler than that obtained by the
designer's approach. This simplification is best illustrated by the example of
Section~\ref{ex:identical} where all controllers receive a common observation
$Y^{com}_t$. For this example, we show that our information state (and hence our
dynamic program) reduce to $\PR(X_t|Y^{com}_{1:t})$, which is identical to the
information state of centralized stochastic control. In contrast,
the information state $\PR(X_t, Y^{com}_{1:t})$ obtained by the designer's
approach is much more complicated.

\subsection{Special Cases: The Results} \label{sec:special_results}

In Section \ref{sec:specialcases}, we described several models of decentralized
control problems that are special cases of the model described in
Section~\ref{sec:model}. In this section, we state the results of
Theorems~\ref{thm:str_result} and \ref{thm:seq_decomposition} for these models.

\subsubsection{Delayed Sharing Information Structure} \label{sec:ex1-result}

\begin{corollary}\label{cor:delayed-sharing}
  In the delayed sharing information structure of section~\ref{sec:ex1},  there exist optimal control strategies of the form
  \begin{equation}
    U^i_t = \hat{g}^i_t(Y^i_{t-s+1:t}, U^i_{t-s+1:t-1}, \Pi_t), \quad i= 1,2,\ldots,n,
  \end{equation}
  where     
  \begin{equation} 
    \Pi_t :=    \mathds{P}^{\hat{g}^{1:n}_{1:t-1}}(X_{t},
    \VEC Y_{t-s+1:t}, \VEC U_{t-s+1:t-1}| C_t).
  \end{equation}
  Moreover, optimal control strategies can be obtained by a dynamic program
  similar to that of Theorem~\ref{thm:seq_decomposition}.
\end{corollary}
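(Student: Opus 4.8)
The plan is to derive the corollary directly from Theorem~\ref{thm:str_result} and Theorem~\ref{thm:seq_decomposition} by observing that the delayed sharing information structure of Section~\ref{sec:ex1} is a special case of the partial history sharing model. The first step is therefore to verify that the protocol in Section~\ref{sec:ex1} is admissible, i.e., that it satisfies \eqref{eq:memory}, \eqref{eq:shared memory}, \eqref{eq:shared_update} and \eqref{eq:local update}. With $C_t = \{\VEC Y_{1:t-s}, \VEC U_{1:t-s}\}$, $M^i_t = \{Y^i_{t-s+1:t-1}, U^i_{t-s+1:t-1}\}$, $Z^i_t = \{Y^i_{t-s+1}, U^i_{t-s+1}\}$ and $M^i_{t+1} = \{Y^i_{t-s+2:t}, U^i_{t-s+2:t}\}$, each requirement is immediate: $M^i_t$ and $C_t$ are subsets of the respective past data; $C_{t+1} = \{C_t, \VEC Z_t\}$ since $\{\VEC Y_{1:t-s+1}, \VEC U_{1:t-s+1}\} = \{\VEC Y_{1:t-s}, \VEC U_{1:t-s}\} \cup \{\VEC Y_{t-s+1}, \VEC U_{t-s+1}\}$; $Z^i_t \subset \{M^i_t, Y^i_t, U^i_t\}$; and $\{M^i_t, Y^i_t, U^i_t\} \setminus Z^i_t = \{Y^i_{t-s+2:t}, U^i_{t-s+2:t}\} = M^i_{t+1}$.

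Once admissibility is established, the structural result follows by invoking Theorem~\ref{thm:str_result} and specializing the notation. Since the protocol fixes $M^i_t$ to be the deterministic collection $\{Y^i_{t-s+1:t-1}, U^i_{t-s+1:t-1}\}$, it may be identified with the vector $(Y^i_{t-s+1:t-1}, U^i_{t-s+1:t-1})$; hence $(Y^i_t, M^i_t)$ is $(Y^i_{t-s+1:t}, U^i_{t-s+1:t-1})$, and the form $U^i_t = \hat g^i_t(Y^i_t, M^i_t, \Pi_t)$ of Theorem~\ref{thm:str_result} becomes the form asserted in the corollary. Likewise, $\VEC M_t$ is identified with $(\VEC Y_{t-s+1:t-1}, \VEC U_{t-s+1:t-1})$, so the common information state $\Pi_t = \PR^{\hat g^{1:n}_{1:t-1}}(X_t, \VEC Y_t, \VEC M_t \mid C_t)$ of \eqref{eq:define_pi} becomes $\PR^{\hat g^{1:n}_{1:t-1}}(X_t, \VEC Y_{t-s+1:t}, \VEC U_{t-s+1:t-1} \mid C_t)$, as stated. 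The claim about the dynamic program is obtained the same way, by reading off Theorem~\ref{thm:seq_decomposition} with the belief space $\mathcal{B}_t$ of \eqref{eq:beliefspace} and the update map $\eta_t$ of Appendix~\ref{sec:update_func} specialized under the same identifications.

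I do not expect a genuine obstacle here; the corollary is essentially a translation of the general theorems into the notation of the delayed sharing model. The only point that needs care is the bookkeeping near $t = 1$ and the case $s = 1$: when $t \le s$ the index ranges $t-s+1 : t-1$ are empty, so $M^i_t = \emptyset$ and part (or all) of $C_t$ is empty, and when $s = 1$ one has $M^i_t = \emptyset$ and $Z^i_t = \{Y^i_t, U^i_t\}$. Using the convention $X_{a:b} = \emptyset$ for $a > b$, all of the substitutions above remain literally valid in these cases, so no separate argument is required; but it is worth a sentence to confirm it.
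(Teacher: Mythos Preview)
Your proposal is correct and matches the paper's approach: the corollary is stated without a separate proof because it is an immediate specialization of Theorems~\ref{thm:str_result} and~\ref{thm:seq_decomposition} to the delayed sharing model, exactly as you outline. Your verification of the admissibility conditions and the identification $(Y^i_t, M^i_t) \equiv (Y^i_{t-s+1:t}, U^i_{t-s+1:t-1})$, together with the boundary-case remark, are precisely the bookkeeping the paper leaves implicit.
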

The above result is analogous to the result in \cite{NMT:2011}.

\subsubsection{Delayed State Sharing Information Structure} \label{sec:ex2-result}
\begin{corollary}\label{cor:delayed-state}
  In the delayed state sharing information structure of section~\ref{sec:ex2},
  there exist optimal control strategies of the form
  \begin{equation} 
    U^i_t = \hat{g}^i_t(X^i_{t-s+1:t}, U^i_{t-s+1:t-1}, \Pi_t), \quad i= 1,2,\ldots,n,
  \end{equation}
  where     
  \begin{equation} 
    \Pi_t :=    \mathds{P}^{\hat{g}^{1:n}_{1:t-1}}( X_{t-s+1:t}, \VEC U_{t-s+1:t-1}| C_t).
  \end{equation}
  Moreover, optimal control strategies can be obtained by a dynamic program
  similar to that of Theorem~\ref{thm:seq_decomposition}.
\end{corollary}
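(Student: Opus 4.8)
The plan is to derive this corollary as a direct specialization of Theorems~\ref{thm:str_result} and \ref{thm:seq_decomposition}, using the fact that the delayed state sharing structure is itself a special case of the delayed sharing structure (Section~\ref{sec:ex2}), which in turn is a special case of the basic model of Section~\ref{sec:model}. First I would instantiate the generic memory variables for this structure: the shared memory is $C_t = \{\VEC X_{1:t-s}, \VEC U_{1:t-s}\}$ (since $Y^i_t = X^i_t$, the shared observation history $\VEC Y_{1:t-s}$ is exactly the state history $\VEC X_{1:t-s}$), the local memory is $M^i_t = \{X^i_{t-s+1:t-1}, U^i_{t-s+1:t-1}\}$, and the current local observation is $Y^i_t = X^i_t$. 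The local data available to controller~$i$ at time $t$ is therefore $(Y^i_t, M^i_t) = (X^i_{t-s+1:t}, U^i_{t-s+1:t-1})$.

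Next I would apply Theorem~\ref{thm:str_result} verbatim with these substitutions. The theorem guarantees optimal strategies of the form $U^i_t = \hat g^i_t(Y^i_t, M^i_t, \Pi_t)$, which becomes $U^i_t = \hat g^i_t(X^i_{t-s+1:t}, U^i_{t-s+1:t-1}, \Pi_t)$ after substitution, giving the claimed structural form. For the common information state, Theorem~\ref{thm:str_result} defines $\Pi_t$ as the conditional distribution of $(X_t, \VEC Y_t, \VEC M_t)$ given $C_t$. Here $\VEC Y_t = (X^1_t, \dots, X^n_t) = X_t$ and $\VEC M_t = (\VEC X_{t-s+1:t-1}, \VEC U_{t-s+1:t-1})$, so the tuple $(X_t, \VEC Y_t, \VEC M_t)$ carries exactly the information in $(X_{t-s+1:t}, \VEC U_{t-s+1:t-1})$ — the redundant copies of $X_t$ collapse. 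Hence $\Pi_t$ reduces to $\mathds{P}^{\hat g^{1:n}_{1:t-1}}(X_{t-s+1:t}, \VEC U_{t-s+1:t-1} \mid C_t)$, which is the expression in the statement. The dynamic programming claim follows immediately by specializing Theorem~\ref{thm:seq_decomposition} in the same way, with $F(\mathcal{Y}^i_t \times \mathcal{M}^i_t, \mathcal{U}^i_t)$ now being the set of maps from $\mathcal{X}^i_{t-s+1:t} \times \mathcal{U}^i_{t-s+1:t-1}$ to $\mathcal{U}^i_t$, and the filter $\eta_t$ inherited from \eqref{eq:updatecoordinator}.

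The only nontrivial step — and the one I would write out with a little care — is the reduction of the information state: one must verify that passing from the distribution of the full tuple $(X_t, \VEC Y_t, \VEC M_t)$ to the distribution of the compressed tuple $(X_{t-s+1:t}, \VEC U_{t-s+1:t-1})$ is a genuine relabeling and not a lossy projection, i.e., that the two tuples are deterministic functions of each other given $C_t$. This holds because, under $Y^i_t = X^i_t$, every component of $(X_t, \VEC Y_t, \VEC M_t)$ is either a component of $(X_{t-s+1:t}, \VEC U_{t-s+1:t-1})$ or a duplicate thereof, and conversely the compressed tuple appears as a sub-tuple of the full one. Consequently the two conditional PMFs determine one another bijectively and may be identified, so the belief space $\mathcal{B}_t$ of \eqref{eq:beliefspace} can be taken to be $\Delta(\mathcal{X}_{t-s+1:t} \times \boldsymbol{\mathcal{U}}_{t-s+1:t-1})$ without loss. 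No further obstacle arises; the remainder is bookkeeping.
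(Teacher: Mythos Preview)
Your proposal is correct and is exactly the approach the paper takes: the corollary is stated without a separate proof because it is a direct specialization of Theorems~\ref{thm:str_result} and~\ref{thm:seq_decomposition}, obtained by instantiating $Y^i_t = X^i_t$, $M^i_t = \{X^i_{t-s+1:t-1}, U^i_{t-s+1:t-1}\}$, and $C_t = \{X_{1:t-s}, \VEC U_{1:t-s}\}$ and then observing that $(X_t, \VEC Y_t, \VEC M_t)$ collapses to $(X_{t-s+1:t}, \VEC U_{t-s+1:t-1})$ under the identification $\VEC Y_t = X_t$. Your explicit verification that the compressed and full tuples are deterministic bijections of one another is the only substantive step, and it is handled correctly.
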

The above result is analogous to the result in \cite{NMT:2011}.

\subsubsection{Periodic Sharing Information Structure} \label{sec:ex3-result}
\begin{corollary}\label{cor:periodic-sharing}
  In the periodic sharing information structure of section~\ref{sec:ex3},  there
  exist optimal control strategies of the form
  \begin{equation}
    U^i_t = \hat{g}^i_t(Y^i_{ks+1:t}, U^i_{ks+1:t-1}, \Pi_t), \quad i=
    1,2,\ldots,n, \quad ks < t \leq (k+1)s,
  \end{equation}
  where 
  \begin{equation} 
    \Pi_t :=    \mathds{P}^{\hat{g}^{1:n}_{1:t-1}}(X_{t},
    \VEC Y_{ks+1:t}, \VEC U_{ks+1:t-1}| C_t),  ~~ks < t \leq (k+1)s.
  \end{equation}
  Moreover, optimal control strategies can be obtained by a dynamic program
  similar to that of Theorem~\ref{thm:seq_decomposition}.
\end{corollary}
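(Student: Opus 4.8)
The plan is to observe that the periodic sharing information structure of Section~\ref{sec:ex3} is a bona fide instance of the basic model of Section~\ref{sec:model}, and then to read off the corollary by specializing Theorem~\ref{thm:str_result} and Theorem~\ref{thm:seq_decomposition}. No new analysis is needed; the content is purely a matter of identifying which quantities play the roles of the local memory $M^i_t$, the shared memory $C_t$, and the prescription domains $\mathcal{Y}^i_t \times \mathcal{M}^i_t$.

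First I would check that the protocol of Section~\ref{sec:ex3} respects the structural constraints~\eqref{eq:memory}, \eqref{eq:shared_update}, and~\eqref{eq:local update} of the general model. Fix $k$ and $t$ with $ks < t \le (k+1)s$. With $C_t = \{\VEC Y_{1:ks}, \VEC U_{1:ks}\}$ and $M^i_t = \{Y^i_{ks+1:t-1}, U^i_{ks+1:t-1}\}$, one verifies directly that $C_t \subset \{\VEC Y_{1:t-1}, \VEC U_{1:t-1}\}$, that $M^i_t \subset \{Y^i_{1:t-1}, U^i_{1:t-1}\}$, and that $C_t$ and $M^i_t$ are disjoint. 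When $t$ is not a multiple of $s$, we have $Z^i_t = \emptyset$, so~\eqref{eq:shared_update} gives $C_{t+1} = C_t$ and~\eqref{eq:local update} is met by $M^i_{t+1} = \{M^i_t, Y^i_t, U^i_t\} = \{Y^i_{ks+1:t}, U^i_{ks+1:t}\}$, consistent with the prescribed contents for time $t+1$ within the same period. When $t$ is a multiple of $s$, the total local information $\{M^i_t, Y^i_t, U^i_t\}$ equals $\{Y^i_{ks+1:t}, U^i_{ks+1:t}\}$, which is exactly $Z^i_t$; hence $C_{t+1} = \{C_t, \VEC Z_t\} = \{\VEC Y_{1:t}, \VEC U_{1:t}\}$ and $M^i_{t+1} = \{M^i_t, Y^i_t, U^i_t\} \setminus Z^i_t = \emptyset$, again matching the prescribed contents at the start of the next period. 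Thus the periodic sharing structure is a legitimate special case and the general results apply.

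Next I would invoke Theorem~\ref{thm:str_result}, which gives an optimal strategy $U^i_t = \hat g^i_t(Y^i_t, M^i_t, \Pi_t)$ with $\Pi_t$ as in~\eqref{eq:define_pi}. Substituting $M^i_t = \{Y^i_{ks+1:t-1}, U^i_{ks+1:t-1}\}$ turns the argument list $(Y^i_t, M^i_t)$ into $(Y^i_{ks+1:t}, U^i_{ks+1:t-1})$, which is precisely the form claimed in the corollary. For the information state, note that the pair $(\VEC Y_t, \VEC M_t)$ carries exactly the same information as $(\VEC Y_{ks+1:t}, \VEC U_{ks+1:t-1})$ (each determines the other), so $\Pi_t = \mathds{P}^{\hat g^{1:n}_{1:t-1}}(X_t, \VEC Y_{ks+1:t}, \VEC U_{ks+1:t-1} \mid C_t)$, as stated. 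The dynamic-program claim is obtained in the same way by transcribing the recursions of Theorem~\ref{thm:seq_decomposition}, now with $\mathcal{M}^i_t$ equal to the space of realizations of $(Y^i_{ks+1:t-1}, U^i_{ks+1:t-1})$, with $\mathcal{B}_t$ the corresponding belief space of~\eqref{eq:beliefspace}, and with the filter $\eta_t$ of~\eqref{eq:updatecoordinator} and Appendix~\ref{sec:update_func} specialized accordingly.

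The step I expect to require the most care is the behavior at the sharing instants $t = (k+1)s$ and the base case $t = ks+1$. At a sharing instant one must confirm that the local memory is flushed in its entirety into the shared memory, so that~\eqref{eq:local update} holds with $M^i_{t+1} = \emptyset$ while the shared memory grows to $\{\VEC Y_{1:t}, \VEC U_{1:t}\}$ without overlap; at $t = ks+1$, where all local memories are empty, one must check that $\Pi_t$ collapses to $\mathds{P}^{\hat g^{1:n}_{1:t-1}}(X_{ks+1}, \VEC Y_{ks+1} \mid C_{ks+1})$, consistent with the displayed formula. Everything else is a routine transcription of Theorems~\ref{thm:str_result} and~\ref{thm:seq_decomposition} to this information structure.
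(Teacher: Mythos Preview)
Your proposal is correct and matches the paper's approach: the corollary is stated without explicit proof in the paper, being an immediate specialization of Theorems~\ref{thm:str_result} and~\ref{thm:seq_decomposition} once the periodic sharing protocol of Section~\ref{sec:ex3} is recognized as an instance of the basic model. Your verification of the memory-update constraints and the identification $(Y^i_t, M^i_t) = (Y^i_{ks+1:t}, U^i_{ks+1:t-1})$ are exactly the substitutions the paper has in mind.
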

The above result gives a finer dynamic programming decomposition
that~\cite{OoiVerboutLudwigWornell:1997}.
In~\cite{OoiVerboutLudwigWornell:1997}, the dynamic programming decomposition
is only carried out at the times of information sharing, $t=ks$, $s=1,2,\dots$;
and at each step the partial control laws until the next sharing instant are
chosen. In contrast, in the above dynamic program, the partial control laws of
each step are chosen sequentially.

\subsubsection{Control Sharing Information Structure} \label{sec:ex4-result}
\begin{corollary}\label{cor:control-sharing}
  In the control sharing information structure of section~\ref{sec:ex4},  there
  exist optimal control strategies of the form
  \begin{equation}
    U^i_t = \hat{g}^i_t(Y^i_{1:t}, \Pi_t), \quad i= 1,2,\ldots,n,
  \end{equation}
  where 
  \begin{equation} 
    \Pi_t :=    \mathds{P}^{\hat{g}^{1:n}_{1:t-1}}(X_{t},
    \VEC Y_{1:t}| C_t).
  \end{equation}
  Moreover, optimal control strategies can be obtained by a dynamic program
  similar to that of Theorem~\ref{thm:seq_decomposition}.
\end{corollary}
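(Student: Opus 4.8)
The plan is to derive Corollary~\ref{cor:control-sharing} as a direct specialization of Theorem~\ref{thm:str_result} and Theorem~\ref{thm:seq_decomposition} to the protocol described in Section~\ref{sec:ex4}. Before invoking those theorems I would first note that the control sharing protocol is indeed an instance of the partial history sharing model: with $C_t = \{\VEC U_{1:t-1}\}$, $M^i_t = \{Y^i_{1:t-1}\}$ and $Z^i_t = \{U^i_t\}$ we have $\{C_t,\VEC Z_t\} = \{\VEC U_{1:t}\} = C_{t+1}$, consistent with~\eqref{eq:shared_update}, and $\{M^i_t,Y^i_t,U^i_t\}\setminus Z^i_t = \{Y^i_{1:t}\} = M^i_{t+1}$, consistent with~\eqref{eq:local update}, so the local and shared memories never overlap. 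Hence Problem~\ref{prob:finite} with this protocol is covered by all of the results of Section~\ref{sec:proof}.

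Next I would apply Theorem~\ref{thm:str_result}, which guarantees optimal strategies of the form $U^i_t = \hat g^i_t(Y^i_t, M^i_t, \Pi_t)$ with $\Pi_t = \PR^{\hat g^{1:n}_{1:t-1}}(X_t, \VEC Y_t, \VEC M_t \mid C_t)$. The only real content of the corollary is then a relabelling: in this special case $(Y^i_t, M^i_t) = (Y^i_t, Y^i_{1:t-1}) = Y^i_{1:t}$, so the control law depends only on $(Y^i_{1:t},\Pi_t)$; and the joint vector $(\VEC Y_t, \VEC M_t) = (Y^1_t,\dots,Y^n_t, Y^1_{1:t-1},\dots,Y^n_{1:t-1})$ is, after reordering coordinates, precisely $\VEC Y_{1:t}$. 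Substituting this into the definition~\eqref{eq:define_pi} of $\Pi_t$ gives $\Pi_t = \PR^{\hat g^{1:n}_{1:t-1}}(X_t, \VEC Y_{1:t}\mid C_t)$, which is exactly the common information state claimed in the corollary, with the belief living in $\Delta(\mathcal X_t \times \prod_{i=1}^n \mathcal Y^i_{1:t})$.

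Finally, the dynamic program follows by carrying the same substitutions through Theorem~\ref{thm:seq_decomposition}: the infimum over $\tilde\gamma^i \in F(\mathcal Y^i_t\times\mathcal M^i_t,\mathcal U^i_t)$ becomes an infimum over maps from $\mathcal Y^i_{1:t}$ to $\mathcal U^i_t$, and the filter $\eta_t$ of~\eqref{eq:updatecoordinator} specializes to the update of $\PR(X_t,\VEC Y_{1:t}\mid C_t)$ upon observing the new controls $\VEC Z_t = \VEC U_t$. I do not expect a genuine obstacle: the argument is entirely a change of variables, and the only point requiring care is the bookkeeping that collapses $(Y^i_t,M^i_t)$ to $Y^i_{1:t}$ and $(\VEC Y_t,\VEC M_t)$ to $\VEC Y_{1:t}$ under the protocol of Section~\ref{sec:ex4}. (Any further reduction of $\Pi_t$ in the presence of additional common observations is a separate matter, treated in Section~\ref{ex:identical}, and is not needed here.)
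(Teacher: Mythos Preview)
Your proposal is correct and matches the paper's approach: the corollary is stated in Section~\ref{sec:special_results} simply as the specialization of Theorems~\ref{thm:str_result} and~\ref{thm:seq_decomposition} to the control-sharing protocol, with no separate proof given. Your explicit verification that the protocol satisfies~\eqref{eq:shared_update} and~\eqref{eq:local update}, together with the relabelling $(Y^i_t,M^i_t)=Y^i_{1:t}$ and $(\VEC Y_t,\VEC M_t)=\VEC Y_{1:t}$, is exactly the bookkeeping the paper leaves implicit.
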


\subsubsection{No Shared Memory with or without finite local memory} \label{sec:ex5-result}
\begin{corollary} \label{cor:no-memory}
  In the information structure of Section~\ref{sec:ex5}, there exist optimal
  control strategies of the form
  \begin{equation}\label{eq:no-memory-structure}
    U^i_t = \hat g^i_t(Y^i_t, M^i_t, \Pi_t)
  \end{equation}
  where
  \begin{equation}
    \Pi_t =   \mathds{P}^{\hat{g}^{1:n}_{1:t-1}}(X_{t}, \VEC Y_t,\VEC M_t)
  \end{equation}
  Moreover, optimal control strategies can be obtained by a dynamic program
  similar to that of Theorem~\ref{thm:seq_decomposition}.
\end{corollary}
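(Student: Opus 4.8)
The plan is to recognise Corollary~\ref{cor:no-memory} as an immediate specialisation of Theorems~\ref{thm:str_result} and~\ref{thm:seq_decomposition} to the case in which the shared memory is always empty. Concretely, I would proceed in three short steps: (i)~verify that the information structure of Section~\ref{sec:ex5} is a legitimate instance of the partial history sharing model; (ii)~substitute $C_t=\emptyset$ into the common information state of Theorem~\ref{thm:str_result}; and (iii)~read off the structural result and the dynamic program from Theorems~\ref{thm:str_result} and~\ref{thm:seq_decomposition}.

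For step~(i), the protocol is $Z^i_t=\emptyset$ and $C_t=\emptyset$ for all $t$. I would check the requirements on the memory updates. The shared-memory recursion~\eqref{eq:shared_update} holds trivially since $\emptyset=\{\emptyset,\emptyset\}$, and the requirement $Z^i_t\subset\{M^i_t,Y^i_t,U^i_t\}$ holds trivially. For the finite-memory variant, $M^i_t=\{Y^i_{t-s:t-1},U^i_{t-s:t-1}\}$, so $\{M^i_t,Y^i_t,U^i_t\}=\{Y^i_{t-s:t},U^i_{t-s:t}\}$, and the prescribed update $M^i_{t+1}=\{Y^i_{t-s+1:t},U^i_{t-s+1:t}\}$ is contained in this set and, since $Z^i_t=\emptyset$, in $\{M^i_t,Y^i_t,U^i_t\}\setminus Z^i_t$, as required by~\eqref{eq:local update}; the full-memory variant $M^i_t=\{Y^i_{1:t-1},U^i_{1:t-1}\}$ is verified the same way.

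For steps~(ii)--(iii), once the model has been placed inside the framework, Theorem~\ref{thm:str_result} already gives an optimal strategy of the form $U^i_t=\hat g^i_t(Y^i_t,M^i_t,\Pi_t)$ with $\Pi_t$ defined by~\eqref{eq:define_pi}. Since $C_t=\emptyset$, the conditioning in~\eqref{eq:define_pi} is vacuous and $\Pi_t=\mathds{P}^{\hat g^{1:n}_{1:t-1}}(X_t,\VEC Y_t,\VEC M_t)$, which is exactly the claimed form~\eqref{eq:no-memory-structure}. The dynamic program is obtained verbatim from Theorem~\ref{thm:seq_decomposition} after observing that $\VEC Z_t=\emptyset$, so the filter $\eta_t$ of~\eqref{eq:updatecoordinator} reduces to a deterministic map $\Pi_{t+1}=\eta_t(\Pi_t,\VEC\Gamma_t)$ and the expectation over $\VEC Z_t$ in the backward recursion disappears.

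The proof is therefore essentially bookkeeping and I do not expect a genuine obstacle. The one point worth flagging is that, because $C_t$ is empty, $\Pi_t$ is no longer a random variable: it is a deterministic element of the belief space $\mathcal{B}_t$ of~\eqref{eq:beliefspace} that is propagated forward through the chosen prescriptions. Consequently the dynamic program of Theorem~\ref{thm:seq_decomposition}, while valid as stated over all $\pi\in\mathcal{B}_t$, need only be evaluated along the single realised trajectory $\Pi_1,\Pi_2,\dots$; I would note this simplification and also remark that, in contrast to examples with nonempty common information, the information state here coincides with the state $\mathds{P}(X_t,\VEC Y_t,\VEC M_t)$ that the designer's approach would use, so that in this special case the payoff of the common-information approach is the structural result rather than a smaller dynamic program.
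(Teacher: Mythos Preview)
Your proposal is correct and matches the paper's approach: the corollary is stated without separate proof, as an immediate specialisation of Theorems~\ref{thm:str_result} and~\ref{thm:seq_decomposition} obtained by setting $C_t=\emptyset$, and the paper then makes exactly the observation you flag---that $\Pi_t$ becomes an unconditional (hence deterministic) probability depending only on past control laws. Your closing remark comparing with the designer's approach is extra commentary beyond what the paper provides, but it is accurate.
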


Note that, since the common information is empty, the common information state $\Pi_t$
is now an \emph{unconditional} probability. In particular, 
$\Pi_t$ is a constant random variable and takes a fixed value  that
depends only on the choice of past control laws. Therefore, we can define an
appropriate control law $\tilde g^i_t$ such that
$\hat{g}^i_t(Y^i_t,M^i_t,\Pi_t) = \tilde g^i_t(Y^i_t,M^i_t)$, with
probability~$1$. Hence, the structural result of~\eqref{eq:no-memory-structure}
may be simplified to
\[ U^i_t = \hat{g}^i_t(Y^i_t,M^i_t,\Pi_t) = \tilde g^i_t(Y^i_t,M^i_t).\]
This result is redundant since all control laws are of the above form.
Nonetheless, Corollary~\ref{cor:no-memory} gives a procedure of finding such
control laws using the dynamic program of Theorem~\ref{thm:seq_decomposition}. 

The above result is similar to the results in \cite{Sandell:phd} for the case of
one controller with finite memory and to those in \cite{Mahajan_thesis} for the
case of two controllers with finite memories.
     
\section{Simplifications and  Generalizations}\label{sec:generalization}

\subsection{Simplification of the Common Information State}

Theorems~\ref{thm:str_result} and \ref{thm:seq_decomposition} identify the
conditional probability distribution on $(X_t,\VEC Y_t,\VEC M_t)$ given $C_t$ as
the common information state for our problem. In the following lemma, we make
the simple observation that in our model the  conditional distribution on $(X_t,
\VEC Y_t, \VEC M_t)$ given $C_t$ is completely determined by the conditional
distribution on $(X_t,\VEC M_t)$ given $C_t$. 
\begin{lemma}\label{lemma:newinfostate}
  For any choice of control laws $\hat{g}^{1:n}_{1:t-1}$, define the conditional
  distribution on $X_t, \VEC M_t$ given $C_t$ as  
  \[ \Pi^{new}_t(x, \VEC m) :=  \mathds{P}^{\hat{g}^{1:n}_{1:t-1}}(X_{t}=x,\VEC
  M_t=\VEC m| C_t),\] for all possible realizations $(x, \VEC m)$ of ($X_t, \VEC
  M_t$). Also define $\mathcal{B}^{new}_t:= \Delta(\mathcal{X}_t \times
  \mathcal{M}^i_t\times\ldots\times\mathcal{M}^n_t)$. Then, 
  \begin{equation}\label{eq:newinfostate1}
    \Pi^{new}_t(x,\VEC m) = \sum_{\VEC y}\Pi_t(x,\VEC y, \VEC m).
  \end{equation}
  Therefore, $\Pi^{new}_t = \chi_t(\Pi_t)$, where each component of the
  $\mathcal{B}^{new}_t$- valued function $\chi_t$ is determined by the right
  hand side of \eqref{eq:newinfostate1}. Also,
  \begin{equation}\label{eq:newinfostate2}
    \Pi_t(x,\VEC y,\VEC m) = \Pi^{new}_t(x, \VEC m)\mathds{P}(\VEC Y_t=\VEC y|X_t =x),
  \end{equation} 
  where the second term on right hand side of \eqref{eq:newinfostate2} is
  determined by the fixed distribution of the observations noises. Therefore,
  $\Pi_t = \zeta_t(\Pi^{new}_t)$, where each component of the $\mathcal{B}_t$-
  valued function $\zeta_t$ is determined by the right hand side of
  \eqref{eq:newinfostate2}. 
\end{lemma}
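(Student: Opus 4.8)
The plan is to prove the three claims of Lemma~\ref{lemma:newinfostate} in turn: first the marginalization identity \eqref{eq:newinfostate1}, then the factorization identity \eqref{eq:newinfostate2}, and conclude that $\Pi_t$ and $\Pi^{new}_t$ determine each other via the stated deterministic maps $\chi_t$ and $\zeta_t$. The identity \eqref{eq:newinfostate1} is immediate: since $\Pi_t$ is by definition (see \eqref{eq:define_pi}) the joint conditional PMF of $(X_t,\VEC Y_t,\VEC M_t)$ given $C_t$, summing out $\VEC y$ gives the conditional PMF of $(X_t,\VEC M_t)$ given $C_t$, which is exactly $\Pi^{new}_t$ as defined in the lemma. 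That $\chi_t$ is a well-defined $\mathcal{B}^{new}_t$-valued function of $\Pi_t \in \mathcal{B}_t$ then follows because the right-hand side of \eqref{eq:newinfostate1} depends on $\Pi_t$ alone.

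The substantive step is \eqref{eq:newinfostate2}, and the key is to establish the conditional independence
\[
  \PR^{\hat g^{1:n}_{1:t-1}}(\VEC Y_t = \VEC y \mid X_t = x, \VEC M_t = \VEC m, C_t)
  = \PR(\VEC Y_t = \VEC y \mid X_t = x),
\]
from which \eqref{eq:newinfostate2} follows by the chain rule for conditional PMFs, $\Pi_t(x,\VEC y,\VEC m) = \PR(\VEC Y_t=\VEC y\mid X_t=x,\VEC M_t=\VEC m,C_t)\,\Pi^{new}_t(x,\VEC m)$. The plan for the conditional independence is to trace back to the primitive variables. By the observation equation \eqref{eq:observation}, $Y^i_t = h^i_t(X_t, W^i_t)$, so $\VEC Y_t$ is a deterministic function of $X_t$ and the noise vector $\VEC W_t = (W^1_t,\dots,W^n_t)$. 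It therefore suffices to argue that $\VEC W_t$ is independent of $(\VEC M_t, C_t)$ given $X_t$ — in fact $\VEC W_t$ is independent of $(X_t, \VEC M_t, C_t)$ outright. Indeed, $X_t$, $\VEC M_t \subset \{\VEC Y_{1:t-1}, \VEC U_{1:t-1}\}$, and $C_t \subset \{\VEC Y_{1:t-1}, \VEC U_{1:t-1}\}$ are all functions of the primitive variables $\{X_1, W^j_\tau : \tau \le t-1, j=0,1,\dots,n\}$ together with the (deterministic, strategy-dependent) control laws $\hat g^{1:n}_{1:t-1}$: this is seen by induction on $t$ using \eqref{eq:state}, \eqref{eq:observation}, \eqref{eq:control}, \eqref{eq:memory}, \eqref{eq:shared memory}, and \eqref{eq:shared_update}. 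Since the primitive variables are mutually independent (stated after \eqref{eq:observation}), $\VEC W_t$ is independent of this whole collection, hence of $(X_t, \VEC M_t, C_t)$. This gives the conditional independence, and then $\PR(\VEC Y_t = \VEC y \mid X_t = x)$ depends only on the fixed noise distributions $Q^i_W$ and the functions $h^i_t$, as claimed.

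Finally, to see that $\Pi_t = \zeta_t(\Pi^{new}_t)$, note that the right-hand side of \eqref{eq:newinfostate2} is the product of $\Pi^{new}_t(x,\VEC m)$ and the quantity $\PR(\VEC Y_t=\VEC y\mid X_t=x)$, which is a fixed (strategy-independent) kernel; so $\Pi_t$ is recovered from $\Pi^{new}_t$ by a deterministic map, and this map takes values in $\mathcal{B}_t$ since the result is a valid PMF on $\mathcal{X}_t\times\prod_i(\mathcal{Y}^i_t\times\mathcal{M}^i_t)$. The main obstacle is purely one of careful bookkeeping: making the induction that expresses $(X_t,\VEC M_t, C_t)$ as a function of the time-$(\le t-1)$ primitive variables fully rigorous, and thereby cleanly justifying the conditional independence of $\VEC Y_t$ from $(\VEC M_t, C_t)$ given $X_t$. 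Everything else is routine manipulation of conditional PMFs.
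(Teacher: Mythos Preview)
Your proposal is correct. The paper does not give an explicit proof of Lemma~\ref{lemma:newinfostate}; it treats both identities as essentially self-evident and moves directly to Theorem~\ref{thm:newinfostate}. Your argument supplies precisely the justification the paper leaves implicit: \eqref{eq:newinfostate1} is marginalization, and \eqref{eq:newinfostate2} follows from the conditional independence $\PR(\VEC Y_t \mid X_t, \VEC M_t, C_t) = \PR(\VEC Y_t \mid X_t)$, which in turn holds because $\VEC Y_t$ is a deterministic function of $(X_t,\VEC W_t)$ via \eqref{eq:observation} and $\VEC W_t$ is independent of $(X_t,\VEC M_t,C_t)$ by the mutual independence of the primitive random variables. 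This is the natural and intended argument.
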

Lemma \ref{lemma:newinfostate} implies that the results of Theorems~\ref{thm:str_result} and \ref{thm:seq_decomposition} can be written in terms of $\Pi^{new}_t$.

\begin{theorem}[Alternative Common Information State]\label{thm:newinfostate}
  In Problem~\ref{prob:finite},  there exist optimal  control strategies of the
  form
  \begin{equation} \label{eq:our_newresult}
    U^i_t = \hat{\hat{g}}^i_t(Y^i_t, M^i_t, \Pi^{new}_t), \quad i= 1,2,\ldots,n,
  \end{equation}
  where     
  \begin{equation} 
    \Pi^{new}_t :=    \mathds{P}^{\hat{\hat{g}}^{1:n}_{1:t-1}}(X_{t},\VEC M_t | C_t).
  \end{equation}
  Further, define the functions 
  $V^{new}_t:  \mathcal{B}^{new}_t \mapsto \reals$ , for $t=1,\dots,T$ as follows:
  \begin{equation}
    V^{new}_{T}(\pi^{new}) = \inf _{\{\tilde\gamma^i_T \in F(\mathcal{Y}^i_T \times \mathcal{M}^i_T,\mathcal{U}^i_T), 1\leq i \leq n\}} 
    \mathds{E}\{l(X_T,\tilde\gamma^1_T(Y^1_T,M^1_T),\ldots,\tilde\gamma^n_T(Y^n_T,M^n_T)) | 
    \Pi_T=\zeta_T(\pi^{new})\},
  \end{equation}
  and for $1 \leq t \leq T-1$,
  \begin{multline}
    V^{new}_{t}(\pi^{new}) = \inf _{\{\tilde\gamma^i_t \in F(\mathcal{Y}^i_t \times \mathcal{M}^i_t,\mathcal{U}^i_t), 1\leq i\leq n\}} 
    \mathds{E}\big\{l(X_t,\tilde\gamma^1_t(Y^1_t,M^i_t),\ldots,\tilde\gamma^n_t(Y^n_t,M^n_t))  + \\ V^{new}_{t+1}(\chi_t(\eta_t(\Pi_t,\tilde\gamma^1_t,\ldots,\tilde\gamma^n_t, \VEC Z_{t}))) \,\big|\,
    \Pi_t=\zeta_t(\pi^{new}) \big\},
  \end{multline}
  % TOOK THIS OUT FROM THE EQ ABOVE, g^i_t(\cdot, \Pi_t) = \tilde\gamma^i, 1\leq i \leq n
  where $\zeta_t, \chi_t$ are defined in Lemma~\ref{lemma:newinfostate}, and
  $\eta_t$ is   defined in \eqref{eq:updatecoordinator} and Appendix
  \ref{sec:update_func}. 
  %Let $\tilde\gamma^{*,i}_t$ be the the minimizing choice of $\tilde\gamma^i$ in the definition of $V_t(\pi)$, $1 \leq t \leq T$.

  For $1 \leq t \leq T$ and for each $\pi^{new}$, an optimal partial control law
  for controller $i$ is the minimizing choice of $\tilde\gamma^i$ in the
  definition of $V^{new}_t(\pi^{new})$.
\end{theorem}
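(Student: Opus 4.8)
The plan is to obtain Theorem~\ref{thm:newinfostate} as a change-of-variables corollary of Theorems~\ref{thm:str_result} and~\ref{thm:seq_decomposition}, with Lemma~\ref{lemma:newinfostate} serving as the dictionary between the two information states. The first step is to observe that, on the set of beliefs that are actually attained by $\Pi_t$, the maps $\chi_t$ and $\zeta_t$ of Lemma~\ref{lemma:newinfostate} are mutual inverses: combining \eqref{eq:newinfostate1} with \eqref{eq:newinfostate2} gives $\Pi_t = \zeta_t(\chi_t(\Pi_t))$, while a one-line computation using $\sum_{\VEC y}\mathds{P}(\VEC Y_t=\VEC y\mid X_t=x)=1$ gives $\chi_t(\zeta_t(\pi^{new}))=\pi^{new}$ for every $\pi^{new}\in\mathcal{B}^{new}_t$. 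Consequently the events $\{\Pi_t=\zeta_t(\pi^{new})\}$ and $\{\Pi^{new}_t=\pi^{new}\}$ coincide, and any function of $\Pi_t$ can be rewritten as a function of $\Pi^{new}_t$ and conversely.

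For the structural result, I would start from the optimal strategy $\hat g^{1:n}$ of Theorem~\ref{thm:str_result} and define $\hat{\hat g}^i_t(y,m,\pi^{new})\DEFINED \hat g^i_t\big(y,m,\zeta_t(\pi^{new})\big)$. Because $\Pi_t=\zeta_t(\Pi^{new}_t)$, these new control laws generate exactly the same action $U^i_t$ from the same data as the old ones, so the two strategies induce the same joint law on $(X_{1:T},\VEC U_{1:T})$; in particular the quantity $\Pi^{new}_t$ computed under $\hat{\hat g}^{1:n}_{1:t-1}$ equals the one computed under $\hat g^{1:n}_{1:t-1}$, which makes the representation \eqref{eq:our_newresult} self-consistent and shows it attains the optimal cost. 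This is the standard argument that an information-state form of an optimal strategy survives the replacement of the information state by an informationally equivalent statistic.

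For the dynamic program, the plan is a backward induction on $t$ proving $V^{new}_t(\pi^{new}) = V_t(\zeta_t(\pi^{new}))$ for all $\pi^{new}\in\mathcal{B}^{new}_t$, where $V_t$ is the value function of Theorem~\ref{thm:seq_decomposition}. The base case $t=T$ is immediate after substituting $\Pi_T=\zeta_T(\pi^{new})$ into the two definitions. For the inductive step, I would rewrite the continuation term in $V_t(\zeta_t(\pi^{new}))$: by the induction hypothesis, for any $\pi'$ that is a realization of $\Pi_{t+1}$ one has $V_{t+1}(\pi') = V_{t+1}\big(\zeta_{t+1}(\chi_{t+1}(\pi'))\big) = V^{new}_{t+1}(\chi_{t+1}(\pi'))$; applying this with $\pi' = \eta_t(\Pi_t,\tilde\gamma^1_t,\ldots,\tilde\gamma^n_t,\VEC Z_t)$ turns the continuation term of $V_t$ into that of $V^{new}_t$. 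Since the instantaneous-cost term, the conditioning event, and the feasible set of prescriptions are literally identical in the two displays, the two optimization problems agree term by term; hence so do the optimal values and the $\arg\inf$ sets, and identifying the optimal partial control law for controller $i$ with the minimizing $\tilde\gamma^i$ completes the argument.

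The delicate point — which I would write out carefully — is that $\chi_t$ and $\zeta_t$ are inverse only on the set of ``conditionally independent observations'' beliefs of the form \eqref{eq:newinfostate2}, not on all of $\mathcal{B}_t$; the induction therefore requires that the filtering update $\eta_t$ maps this set into itself, so that $\pi'=\eta_t(\Pi_t,\ldots)$ again satisfies $\zeta_{t+1}(\chi_{t+1}(\pi'))=\pi'$. This holds because $\VEC Y_{t+1}=\VEC h_{t+1}(X_{t+1},\VEC W_{t+1})$ with $\VEC W_{t+1}$ independent of the state and of all data up to time $t$, so conditionally on $X_{t+1}$ the new observations are independent of $\VEC M_{t+1}$ with the fixed noise-induced conditional law — exactly the invariance already used in Lemma~\ref{lemma:newinfostate}. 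Once this invariance is established, the theorem carries no probabilistic content beyond Lemma~\ref{lemma:newinfostate} and Theorems~\ref{thm:str_result}--\ref{thm:seq_decomposition}, and the rest is bookkeeping.
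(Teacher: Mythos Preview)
Your proposal is correct and follows essentially the same route as the paper: a backward induction establishing $V^{new}_t(\pi^{new}) = V_t(\zeta_t(\pi^{new}))$, after which the structural result and the optimality of the new dynamic program are read off via the bijection of Lemma~\ref{lemma:newinfostate}. If anything you are more careful than the paper's one-line proof, which asserts both $V^{new}_t(\pi^{new}) = V_t(\zeta_t(\pi^{new}))$ and $V_t(\pi) = V^{new}_t(\chi_t(\pi))$ without isolating the invariance point you flag---that $\eta_t$ lands in the range of $\zeta_{t+1}$, so that $\zeta_{t+1}\circ\chi_{t+1}$ acts as the identity on the beliefs actually produced by the filter.
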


\begin{proof}
  For any $\pi^{new} \in \mathcal{B}^{new}_t$ and any $\pi \in \mathcal{B}_t$,
  it is straightforward to establish using a backward induction argument that
  $V^{new}_t(\pi^{new}) = V_t(\zeta_t(\pi^{new}))$ and $V_t(\pi) =
  V^{new}_t(\chi_t(\pi))$, where $V_t(\cdot)$ is the value function from the
  dynamic program in Theorem~\ref{thm:seq_decomposition}. The optimality of the
  new dynamic program then follows from the optimality of the dynamic program in
  Theorem~\ref{thm:seq_decomposition}.  
\end{proof}

The result of Theorem~\ref{thm:newinfostate} is conceptually the same as the
results in Theorems~\ref{thm:str_result} and \ref{thm:seq_decomposition}.
Theorem~\ref{thm:newinfostate} implies that the Corollaries of
Section~\ref{sec:special_results} can be restated in terms of new information
states by simply removing $Y_t$ from the definition of original information
states. For example, the result of Corollary~\ref{cor:delayed-sharing} for
delayed sharing information structure is also
true when $\Pi_t$ is replaced by
\begin{equation} \label{eq:delay-sharing}
  \Pi^{new}_t := \mathds{P}^{\hat g^{1:n}_{1:t-1}}(X_t,
      \mathbf Y_{t-s+1:t-1}, \mathbf U_{t-s+1:t-1} | C_t). 
\end{equation}
      This result is simpler than that of~\cite[Theorem~2]{NMT:2011}.

\subsection{Generalization of the  Model}
The methodology described in Section~\ref{sec:proof} relies on the fact that the
shared memory is \emph{common information} among all controllers. Since the
coordinator in the coordinated system knows only the common information, any
coordination strategy can be mapped to an equivalent control strategy in the
basic model (see Stage 4 of Section~\ref{sec:proof}). In some cases, in addition
to the shared memory, the current observation (or if the current observation is
a vector, some components of it) may also be commonly available to all
controllers. The general methodology of Section 2 can be easily modified to
include such cases as well.

Consider the model of Section~\ref{sec:model} with the following modifications:
\begin{enumerate}
  \item  In addition to their current local observation, all controllers have a
    \emph{common observation} at time $t$. 
    \begin{equation}  \label{eq:common_observation}
      Y^{com}_t = h^{com}_t(X_t, V_t)
    \end{equation}
    where $\{V_t, t=1,\dots,T\}$ is a sequence of i.i.d\@. random variables with
    probability distribution $Q_V$ which is independent of all other primitive
    random variables.

  \item  The shared memory $C_t$ at time $t$ is a subset of
    $\{Y^{com}_{1:t-1},\VEC Y_{1:t-1}, \VEC U_{1:t-1}\}$.

  \item Each controller selects its action using a control law of the form
    \begin{equation}\label{eq:newcontrol}
      U^i_t = g^i_t(Y^i_t, M^i_t, C_t,Y^{com}_t).
    \end{equation}

  \item After taking the control action at time $t$, controller~$i$ sends a
    subset $Z^i_t$ of  $\{M^i_t, Y^i_t, U^i_t, Y^{com}_t\}$ that necessarily
    includes $Y^{com}_t$. That is,
    \[Y^{com}_t \subset Z^i_t \subset \{M^i_t, Y^i_t, U^i_t, Y^{com}_t\}.  \]
    This implies that the history of common observations is necessarily a part
    of the shared memory, that is, $Y^{com}_{1:t-1} \subset C_t$.  
\end{enumerate}
The rest of the model is same as in Section~\ref{sec:model}. In particular, the
local memory update satisfies \eqref{eq:local update}, so the local memory and
shared memory at time $t+1$ don't overlap. The instantaneous cost is given by
$\COST(X_t, U_t)$ and the objective is to minimize an expected total cost given
by \eqref{eq:cost}.

The arguments of Section~\ref{sec:proof} are also valid for this model. The
observation process in Lemma~\ref{lemma:state} is now defined as $R_{t+1} =
\{\VEC Z_t, Y^{com}_{t+1}\}$. The analysis of Section~\ref{sec:proof} leads to
structural results and dynamic programming decompositions analogous to
Theorems~\ref{thm:str_result} and \ref{thm:seq_decomposition} with $\Pi_t$ now
defined as
\begin{equation} \label{eq:define_newpi}
  \Pi_t :=    \mathds{P}^{g^{1:n}_{1:t-1}}(X_{t}, \VEC Y_t,\VEC M_t | C_t,Y^{com}_t).
\end{equation}
Using an argument similar to Lemma~\ref{lemma:newinfostate}, we can show that
the result of Theorem~\ref{thm:newinfostate} is true for the above model with
$\Pi^{new}_t$ defined as
\begin{equation}
  \Pi^{new}_t :=
  \mathds{P}^{\hat g^{1:n}_{1:t-1}}(X_t, \VEC M_t | C_t, Y^{com}_t). 
\end{equation}

\subsection{Examples of the Generalized Model}\label{sec:generalized_examples}
\subsubsection{Controllers with Identical Information} \label{ex:identical}
Consider the following special case of the above generalized model.
\begin{enumerate}
  \item All controllers only make the common observation $Y^{com}_t$;
    controllers have no local observation or local memory. 
  \item The shared memory at time $t$ is $C_t = Y^{com}_{1:t-1}$. Thus, at time
    $t$, all controllers have identical information given as $\{C_t,Y^{com}_t\}
    = Y^{com}_{1:t}$. 
  \item After taking the action at time $t$, each controller sends $Z^i_t =
    Y^{com}_t$ to the shared memory.
\end{enumerate}

Recall that the coordinator's prescription $\Gamma^i_t$ in
Section~\ref{sec:proof} are chosen from the set of functions from
$\mathcal{Y}^i_t \times \mathcal{M}^i_t$ to $\mathcal{U}^i_t$. Since, in this
case $\mathcal{Y}^i_t = \mathcal{M}^i_t = \emptyset$, we interpret the
coordinator's prescription as prescribed actions. That is, $\Gamma^i_t \equiv
U^i_t$. With this interpretation, the common information state becomes
\begin{equation} \label{eq:pomdp_info_new}
  \Pi_t := \mathds{P}^{ g^{1:n}_{1:t-1}}(X_t|Y^{com}_{1:t}) 
\end{equation}
and the dynamic program of Theorem~\ref{thm:seq_decomposition} becomes
\begin{equation}
    V_{T}(\pi) = \inf _{\{u^i_T  \in \mathcal{U}^i_T), 1\leq i \leq n\}} 
    \mathds{E}\{l(X_T,u^1_T,\ldots,u^n_T) | 
    \Pi_T=\pi\},
\end{equation}
and for $1 \leq t \leq T-1$,
\begin{multline}
    V_{t}(\pi) = \inf _{\{u^i_t  \in \mathcal{U}^i_T), 1\leq i \leq n\}} 
    \mathds{E}\big\{l(X_t,u^1_t,\ldots,u^n_t)+ V_{t+1}(\eta_t(\pi,u^1_t,\ldots,u^n_t, Y^{com}_{t+1})) \,\big|\,
    \Pi_t=\pi \big\}.
\end{multline}
  %, U^i_t = u^i_t, 1\leq i \leq n
Since all the controllers have identical information, the above results
correspond to the centralized dynamic program of Theorem~\ref{thm:pomdp} with a
single controller choosing all the actions. 

%Moreover, it can be shown that the  information state $\Pi_t$ defined in equation \eqref{eq:pomdp_info_new}  that does not use past actions in the conditioning and the information state $\Theta_t = \mathds{P}(X_t|Y^{com}_{1:t}, \VEC{U}_{1:t-1})$ (as would be suggested by a direct application of Theorem~\ref{thm:pomdp}) are identical \cite{KumarVaraiya:1986}. 

\subsubsection {Coupled subsystems with control sharing information structure}

Consider the following special case of the above generalized model.
\begin{enumerate}
  \item The state of the system at time $t$ is a $(n+1)$-dimensional vector
    $X_t = (X^1_t, X^2_t, \dots, X^n_t, X^*_t)$, where $X^i_t$, $i=1,\dots,n$
    corresponds to the local state of subsystem~$i$, and $X^*_t$ is a global state of
    the system.
  \item The state update function is such that the global state evolves
    according to
    \[ X^*_{t+1} = f^*_t(X^*_t, \VEC U_t, N^0_t), \]
    while the local state of subsystem~$i$ evolves according to
    \[ X^i_{t+1} = f^i_t(X^i_t, X^*_t, \VEC U_t,N^i_t), \]
    where $\{N^0_t, t=1,\dots T\}, \ldots, \{N^n_t, t=1,\dots T\}$ are mutually independent i.i.d noise
      processes that are independent of the initial state, $X_1 = (X^1_1, X^2_1, \dots, X^n_1, X^*_1)$.
  \item At time $t$, the common observation of all controllers is given by $Y^{com}_t = X^*_t$. 
  \item At time $t$, the local observation of controller~$i$ is given by $Y^i_t
    = X^i_t$, $i=1,\dots,n$. 
  \item The shared memory at time $t$ is $C_t = \{X^*_{1:t-1}, \VEC
    U_{1:t-1}\}$. At each time $t$, after taking the action $U^i_t$,
    controller~$i$ sends $Z^i_t = \{X^*_t,U^i_t\}$ to the shared memory. 
%  \item No controller has any local memory, i.e., $M_t = \emptyset$.
\end{enumerate}

The above special case corresponds to the model of coupled subsystems with
control sharing considered in \cite{Mahajan:2011}, where several applications of
this model are also presented. It is shown in \cite{Mahajan:2011} that there is
no loss of optimality in restricting attention to controllers with no local
memory, i.e., $M_t = \emptyset$. With this additional restriction, 
the result of Theorems 1 and 2 apply for this model with $\Pi_t$
defined as
\[\Pi_t := \mathds{P}^{g^{1:n}_{1:t-1}}(X^{*}_t,X^1_t,\ldots,X^n_t|X^*_{1:t},\VEC U_{1:t-1}). \]
Note that $\Pi_t$ can be evaluated from $X^{*}_t$ and
$\mathds{P}^{g^{1:n}_{1:t-1}}(X^1_t,\ldots,X^n_t|X^*_{1:t},\VEC U_{1:t-1})$. It
is  shown in \cite{Mahajan:2011} that $X^1_t,X^2_t,\ldots,X^n_t$ are
conditionally independent given  $X^*_{1:t},\VEC U_{1:t-1}$, hence the joint
distribution $\mathds{P}^{g^{1:n}_{1:t-1}}(X^1_t,\ldots,X^n_t|X^*_{1:t},\VEC
U_{1:t-1})$ is a product of its marginal distributions.
%\begin{align}
%&\mathds{P}^{g^{1:n}_{1:t-1}}(X^1_t,\ldots,X^n_t|X^*_{1:t},\VEC U_{1:t-1}) \notag \\
%&=\mathds{P}^{g^{1:n}_{1:t-1}}(X^1_t|X^*_{1:t},\VEC U_{1:t-1}) \times \mathds{P}^{g^{1:n}_{1:t-1}}(X^2_t|X^*_{1:t},\VEC U_{1:t-1}) \notag \\
%&\times \ldots \times \mathds{P}^{g^{1:n}_{1:t-1}}(X^n_t|X^*_{1:t},\VEC U_{1:t-1}) 
%\end{align}

\subsubsection {Broadcast information structure}
Consider the following special case of the above generalized model.

\begin{enumerate}
  \item The state of the system at time $t$ is a $n$-dimensional vector
    $X_t = (X^1_t, X^2_t, \dots, X^n_t)$, where $X^i_t$, $i=1,\dots,n$
    corresponds to the local state of subsystem~$i$. The first component $i=1$
    is special and called the \emph{central node}. Other components,
    $i=2,\dots,n$, are called \emph{peripheral nodes}.
  \item 
    The state update function is such that the state of the central node evolves according to
    \[ X^1_{t+1} = f^1_t(X^1_t, U^1_t, N^1_t) \]
    while the state of the peripheral nodes evolves according to
    \[ X^i_{t+1} = f^i_t(X^i_t, X^1_t, U^i_t,U^1_t, N^i_t) \]
    where $\{N^i_t, i=1,2,\ldots n; t=1,\dots\}$ are noise processes that are independent across
    time and independent of each other.
  \item At time $t$, the common observation of all controllers is given by $Y^{com}_t = X^1_t$. 
  \item At time $t$, the local observation of controller~$i$, $i > 2$, is given by $Y^i_t
    = X^i_t$. Controller~1 does not have any local observations.
  \item No controller sends any additional data to the shared memory.
    Thus, the shared memory consists of just the history of common observations,
    \emph{i.e.}, $C_t = Y^{com}_{1:t-1} = X^1_{1:t-1}$.
  %\item No controller has any local memory, i.e., $M_t = \emptyset$.
\end{enumerate}

The above special case corresponds to the model of decentralized systems with
broadcast structure considered in \cite{WuLall:2010}. 
It is shown in \cite{WuLall:2010} that there is
no loss of optimality in restricting attention to controllers with no local
memory, i.e., $M_t = \emptyset$. With this additional restriction, 
the result of Theorems 1 and 2 apply for this model with $\Pi_t$
defined as
\[\Pi_t := \mathds{P}^{g^{1:n}_{1:t-1}}(X^1_t,\ldots,X^n_t|X^1_{1:t}). \]
Note that $\Pi_t$ can be evaluated from $X^{1}_t$ and
$\mathds{P}^{g^{1:n}_{1:t-1}}(X^2_t,\ldots,X^n_t|X^1_{1:t})$. It is  shown in
\cite{WuLall:2010} that $X^2_t,\ldots,X^n_t$ are conditionally independent given
$X^1_{1:t}$, hence the joint distribution
$\mathds{P}^{g^{1:n}_{1:t-1}}(X^2_t,\ldots,X^n_t|X^1_{1:t})$ is a product of its
marginal distributions.
 
\section{Extension to infinite horizon} \label{sec:infinite}

In this Section, we consider the basic model of Section~\ref{sec:model} with an
infinite time horizon. Assume that 
\begin{enumerate}
  \item[(i)] The state of the system, the observations and the control actions take
    value in time-invariant sets $\mathcal{X}, \mathcal{Y}^i, \mathcal{U}^i$,
    respectively.
  \item[(ii)] The local memories $M^i_t$ and the updates to the shared memory $Z^i_t$
    take values in time-invariant sets $\mathcal{M}^i$ and $\mathcal{Z}^i$
    respectively.
  \item[(iii)] The dynamics of the system (equation~\eqref{eq:state}) and the
    observation model (equation~\eqref{eq:observation}) are time-homogeneous.
    That is, the functions $f_t$ and $h_t$ in equations \eqref{eq:state} and
    \eqref{eq:observation} do not vary with time.
\end{enumerate}

Let the cost of using a strategy $\VEC g^{1:n}$ be defined as
\begin{equation}
  \label{eq:infinite_cost}
  J(\VEC g^{1:n}) \DEFINED \EXP^{\VEC g^{1:n}}\Big[ 
  \sum_{t=1}^\infty \beta^{t-1} \COST(X_t, \VEC U_t) \Big],
\end{equation}
where $\beta \in [0,1)$ is a discount factor. We can follow the arguments of
Section~\ref{sec:proof} to formulate the problem of the coordinated system with
an infinite time horizon. As in Section~\ref{sec:proof}, the coordinated system
is equivalent to a POMDP. The time-homogeneous nature of the coordinated system
and its equivalence to a POMDP allows us to use known POMDP results (see
\cite{Whittle:1983}) to conclude the following theorem for the infinite time
horizon problem.

\begin{theorem}\label{thm:inf_horizon}
  Consider Problem~\ref{prob:finite} with infinite time horizon and the
  objective of minimizing the expected cost given by
  equation~\eqref{eq:infinite_cost}. Then, there exists an  optimal
  time-invariant control strategy  of the form:
  \begin{equation} \label{eq:inf_result}
    U^i_t = g^i(Y^i_t, M^i_t, \Pi_t), \quad i= 1,2,\ldots,n,
  \end{equation}
  Furthermore, consider the fixed point equation, 
  \begin{multline} 
    V(\pi) = \inf _{\{\tilde\gamma^i \in F(\mathcal{Y}^i \times \mathcal{M}^i,\mathcal{U}^i), 1\leq i\leq n\}} 
    \mathds{E}\big\{l(X_t,\tilde\gamma^1_t(Y^1_t,M^i_t),\ldots,\tilde\gamma^n_t(Y^n_t,M^n_t))+ \\ \beta V(\eta_{t}(\pi,\tilde\gamma^1,\ldots,\tilde\gamma^n,\VEC{Z}_{t})) \,\big|\, 
  \Pi_t=\pi \big\}. \label{eq:fixedpointeq}
  \end{multline}
  %, g^i_t(\cdot, \Pi_t) = \tilde\gamma^i, 1\leq i \leq n
  Then, for any realization $\pi$ of $\Pi_t$, the optimal partial control laws
  are the choices of $\gamma^i$ that achieve the infimum in the right hand side
  of \eqref{eq:fixedpointeq}.
\end{theorem}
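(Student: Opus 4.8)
The plan is to reduce the infinite-horizon decentralized problem to an infinite-horizon discounted POMDP at the coordinator, exactly as Stages 1--5 of Section~\ref{sec:proof} do in the finite-horizon case, and then invoke standard infinite-horizon POMDP theory. First I would check that the time-invariance assumptions (i)--(iii) make the coordinated system of Stage~1 time-homogeneous: the state space $\mathcal S = \mathcal X \times \prod_i(\mathcal Y^i \times \mathcal M^i)$, the observation space $\mathcal O = \prod_i \mathcal Z^i$, and the action space $\mathcal A = \prod_i F(\mathcal Y^i\times\mathcal M^i, \mathcal U^i)$ are all time-invariant, and by Lemma~\ref{lemma:state} the transition kernel $\PR(S_{t+1},\VEC Z_t \mid S_t,\VEC\Gamma_t)$ and the cost $\tilde\COST(S_t,\VEC\Gamma_t)$ do not depend on $t$ (since $f,h$ and the memory-update protocols are fixed). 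Hence the coordinated system is a time-homogeneous discounted-cost POMDP with discount factor $\beta\in[0,1)$ and, because all the relevant alphabets are finite, bounded per-stage cost.

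Next I would apply the standard belief-state reduction for discounted POMDPs: the information state $\Pi_t=\PR(S_t\mid \VEC Z_{1:t-1},\VEC\Gamma_{1:t-1})$ evolves as a controlled Markov chain $\Pi_{t+1}=\eta(\Pi_t,\VEC\Gamma_t,\VEC Z_{t})$ on the compact set $\mathcal B = \Delta(\mathcal S)$ via the time-invariant filter $\eta$ of Appendix~\ref{sec:update_func}, and the coordinator's problem becomes a fully observed discounted MDP on $\mathcal B$. Classical results (see \cite{Whittle:1983}) then give that the optimal value function $V$ is the unique bounded solution of the fixed-point (Bellman) equation \eqref{eq:fixedpointeq}, that the contraction-mapping/value-iteration argument applies because $\beta<1$, and that there is an optimal stationary (time-invariant) policy $\VEC\Gamma_t=\hat d(\Pi_t)$ attaining the infimum on the right-hand side. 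One should note the minor bookkeeping point that here the per-stage cost $\tilde\COST(S_t,\VEC\Gamma_t)$ depends on both the state and the action, whereas the POMDP template of Theorem~\ref{thm:pomdp} wrote $\tilde l(S_t,A_t)$; this is exactly the same form, and the only effect is that the conditioning in the Bellman equation is on $\{\Pi_t=\pi,\VEC\Gamma_t=(\gamma^1,\dots,\gamma^n)\}$, as written in \eqref{eq:fixedpointeq}.

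Finally I would translate back to the basic model using the stationary analogue of Proposition~\ref{prop:equiv}: a time-invariant coordination rule $\hat d$ that maps $\Pi_t$ to prescriptions yields, via $g^i(Y^i_t,M^i_t,\Pi_t):=\hat d^i(\Pi_t)(Y^i_t,M^i_t)$, a time-invariant control strategy for the basic model with the same discounted cost, and conversely; the equivalence of costs is unchanged from the finite-horizon proof because it is a term-by-term (now $\beta$-weighted) identity. Combining this equivalence with the existence of an optimal stationary belief-based policy for the coordinator gives \eqref{eq:inf_result}, and the optimal partial control laws are read off as the minimizers in \eqref{eq:fixedpointeq}, proving the theorem. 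The main obstacle is not any single hard estimate but making sure the standing assumptions (i)--(iii) genuinely deliver time-homogeneity of the coordinated POMDP --- in particular that the memory-update protocols and the sets $\mathcal M^i,\mathcal Z^i$ are time-invariant, which is what assumption~(ii) is there to guarantee --- and then simply citing the appropriate infinite-horizon discounted POMDP existence/Bellman result rather than reproving it.
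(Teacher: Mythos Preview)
Your proposal is correct and follows essentially the same route as the paper: the paper's argument (given in the paragraph preceding the theorem rather than as a separate proof) simply notes that the Stages~1--5 construction of Section~\ref{sec:proof} yields a time-homogeneous discounted POMDP under assumptions (i)--(iii), and then cites \cite{Whittle:1983} for the infinite-horizon POMDP structural result and fixed-point equation. Your write-up is in fact more explicit than the paper's about verifying time-homogeneity and about the stationary analogue of Proposition~\ref{prop:equiv}, but the logical skeleton is identical.
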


All the special cases of our information structure considered in
Sections~\ref{sec:specialcases} and \ref{sec:generalized_examples} can be
extended to infinite horizon problems if the state, observation and actions
spaces are time-invariant and the systems dynamics and observation equations are
time homogeneous. The only exception is the control sharing information
structure of section~\ref{sec:ex4} where the local memory takes values in sets
that are increasing with time.

\emph{The Case of No Shared Memory:}
As discussed in Section \ref{sec:special_results}, if the shared memory is
always empty then the common information state defined in
Theorem~\ref{thm:str_result} is the \emph{unconditional} probability  $\Pi_t =
\mathds{P}^{g^{1:n}_{1:t-1}}(X_{t}, \VEC Y_t,\VEC M_t)$. In particular,  $\Pi_t$
is a random variable that takes a fixed (constant) value which depends only on
the choice of past control laws. Therefore, for any function $g^i_t$ of
$Y^i_t,M^i_t,\Pi_t$, there exists a function $\tilde g^i_t$ of $Y^i_t,M^i_t$
such that  $\tilde g^i_t(Y^i_t,M^i_t) =  g^i_t(Y^i_t,M^i_t,\Pi_t)$ with
probability $1$. While Theorem \ref{thm:inf_horizon} establishes optimality of a
time-invariant $g^i_t$,  such time-invariance may not hold for the corresponding
$\tilde g^i_t$. Similar observations were reported in
\cite{MahajanTeneketzis:2010}. 

%%%%%%%%%%%%%%%%%DISCUSSION%%%%%%%%%%%%%%%%%%%%%%%%%%%%%%%

\section{Discussion and Conclusions} \label{sec:conclusion}

In centralized stochastic control, the controller's belief on the current state
of the system plays a fundamental role  for predicting future costs. If the
control strategy for the future is fixed as a function of future beliefs, then
the current belief is a sufficient statistic for future costs under any choice
of current action. Hence, the optimal action at the current time is only a
function of current belief on the state. In decentralized problems where
different controllers have different information, using a controller's belief on
the state of the system presents two main difficulties: (i) Since the costs
depend both on system state as well as other controllers' actions  any
prediction of future costs must involve a belief on system state as well as some
means of predicting  other controllers' actions. (ii) Secondly, since different
controllers have different information, the beliefs formed by each controller
and their predictions of future costs cannot be expected to be consistent. 

The approach we adopted in this paper tries to address these difficulties by
using the fact that sharing of data among controllers creates common knowledge
among the controllers. Beliefs based on this common knowledge are necessarily
consistent among all controllers and can serve as a consistent sufficient
statistic. Moreover, while controllers cannot accurately predict each other's
control actions, they can know, for the observed realization of common
information, the exact mapping used by each controller to map its local
information  to control action. These considerations suggest that common
information based beliefs and partial control laws should play an important role
in a general theory of decentralized stochastic control problems. The use of a
fictitious coordinator allows us to make these considerations mathematically
precise. Indeed,  the coordinator's beliefs are based on common information and
the coordinator's decision are the partial control laws. The results of the
paper then follow by observing that the coordinator's problem can be viewed as a
POMDP by identifying a new state that includes both the state of the dynamic
system as well as the local information of the controllers.

The specific model of shared and local memory update that we assumed is crucial
for connecting the coordinator's problem to POMDPs and centralized stochastic
control. A key assumption in centralized stochastic control is perfect recall,
that is, the information obtained at any time is remembered at all future times.
This is essential for the update of the beliefs in POMDPs. Our assumption that
the shared memory is increasing in time ensures that the perfect recall property
is true for the coordinator's problem. If the shared memory did not have perfect
recall (that is, if some past contents were lost over time), then the update of
common information state in \eqref{eq:updatecoordinator} would not hold and the
results of Theorems~\ref{thm:str_result} and \ref{thm:seq_decomposition} would
not be true. 

Another key factor in our result is that $S_t :=\{X_t, \VEC Y_t, \VEC M_t\}$
serves as a state for the coordinator's problem. If the system state,
observations and local memories take value in a time-invariant space, we have a
state for the coordinator's problem which takes value in a time-invariant space.
Hence, the common information state is a belief on a time-invariant space. The
local memory update in \eqref{eq:local update} ensures that  $S_t$ is a state.
If local memory update depended on shared memory as well, that is, if
\eqref{eq:local update} were replaced by
\[  M^i_{t+1} \subset \{C_t,M^i_t,Y^i_t,U^i_t\},\]
then $S_t$ would no longer suffice as a state for the coordinator. In
particular, the state update equations in Lemma \ref{lemma:state} would no
longer hold. The only recourse then would be to include $C_t$ as a part of the
state which would necessarily mean that the state space keeps increasing with
time. This is undesirable not only because large state spaces imply increased
complexity, but the increasing size of state spaces also makes extensions of
finite horizon results to infinite horizon problems conceptually difficult.

The connection between the coordinator's problem and POMDPs can be used for
computational purposes as well. The dynamic  program of
Theorem~\ref{thm:seq_decomposition} is essentially a POMDP dynamic program. In
particular, just as in POMDP, the value-functions are piecewise linear and
concave in $\pi_t$. This characterization of value functions is utilized to find
computationally efficient algorithms for POMDPs. Such algorithmic solutions to
general POMDPs are well-studied and can be employed here. We refer the reader to
\cite{Zhang:2009} and references therein for a review of algorithms to solve
POMDPs.
  
While our results apply to a broad class of models, it would be worthwhile to
identify special cases where the specific model features can be exploited to
simplify our structural result. Examples of such  simplification appear in
\cite{Mahajan:2011, WuLall:2010}. A common theme in many centralized dynamic
programming solutions is to identify a key property of the value functions and
use it to characterize the optimal decisions. Since our results also provide a
dynamic program, an important avenue for future work would be to identify cases
where properties of value functions can be analyzed to deduce a solution or to
reduce the computational burden of finding the solution. 

%Potential Applications? the relation with POMDPs also implies that one can use POMDP algorithms and approximations to solve the coordinator's dynamic program?

Our approach in this paper illustrates that common information provides a common
conceptual framework for several decentralized stochastic control problems. In
our model, we explicitly included a shared memory which naturally served the
purpose of common information among the controllers. More generally, we can
\emph{define} common information for any sequential decision-making problem and
then address the problem from the perspective of a coordinator who knows the
common information. Such a common information based approach for general
sequential decision-making problems is presented in \cite{mythesis}.  

%%%%%%%%%%%%%%%%%%%%%%%%%%%%%%%%%%%%%%%%%%%%%%%%%%%%%%%%%

\section{Acknowledgments}

This work  was supported by  NSERC through the grant NSERC-RGPIN 402753-11, by
NSF through the grant CCF-1111061 and by  NASA through the grant NNX09AE91G. 

\bibliographystyle{IEEEtran}
\bibliography{IEEEabrv,myref,collection}

\appendices
\section{The Update Function $\eta_t$ of the Coordinator's Information State}\label{sec:update_func}

 Consider a realization
$c_{t+1}$ of the shared memory $C_{t+1}$ at time $t+1$. Let $(\VEC \gamma_{1:t})$ be the corresponding
realization of the coordinator's prescriptions until time $t$. We assume the
realization $(c_{t+1},\pi_{1:t},\VEC \gamma_{1:t})$ to be of non-zero probability. Then, the realization
$\pi_{t+1}$ of $\Pi_{t+1}$ is given by
\begin{equation}\label{eq:app-info-1}
  \pi_{t+1}(s) = \mathds{P}\{S_{t+1} = s | c_{t+1}, \VEC \gamma_{1:t}\}.
\end{equation}
Use Lemma~\ref{lemma:state} to simplify the above expression as
\begin{align} 
  \hskip 2em & \hskip -2em
  \sum_{s_t, w^0_t, \VEC w_{t+1}} 
  \IND_{s}( \tilde f_{t}(s_t, \VEC \gamma_t, w^0_t, \VEC w_{t+1} ))
   \cdot 
   \mathds{P}\{W^0_t = w^0_t\} \cdot \mathds{P}\{\VEC W_{t+1} = \VEC w_{t+1}\}
   \cdot 
   \mathds{P}\{S_t = s_t | c_{t+1}, \VEC \gamma_{1:t}\}.
\label{eq:app-info-2}
\end{align}
Since $c_{t+1} = (c_t, \VEC z_{t})$, write the last term
of~\eqref{eq:app-info-2} as
\begin{align}
  \mathds{P}\{S_t = s_t| c_{t}, \VEC z_{t}, \VEC \gamma_{1:t}\} = \frac
  {\mathds{P}\{S_t = s_t, \VEC Z_{t} = \VEC z_{t} | c_t, \VEC \gamma_{1:t}\}}
  {\sum_{s'} \mathds{P}\{S_t = s', \VEC Z_{t} = \VEC z_{t} |  c_t, \VEC \gamma_{1:t}\}}.
  \label{eq:app-info-3}
\end{align}

Use Lemma \ref{lemma:state} and the sequential order in which
the system variables are generated to write the numerator as
\begin{align}
  \hskip 2em & \hskip -2em
  \mathds{P}\{S_t = s_t, \VEC Z_{t} = \VEC z_{t} | c_t, \VEC \gamma_{1:t}\}
  = \IND_{\tilde h_t(s_t, \VEC \gamma_t)}(\VEC z_{t}) 
     \cdot
     \mathds{P}\{S_t = s_t| c_t, \VEC \gamma_{1:t}\} \label{eq:new_eq_in_app_A} \\
  &= \IND_{\tilde h_t(s_t, \VEC \gamma_t)}(\VEC z_{t})\cdot\pi_t(s_t).
  \label{eq:app-info-4}
\end{align}
where we dropped $\VEC \gamma_t$ from conditioning
in (\ref{eq:new_eq_in_app_A}) since under the given coordinator's
strategy, it is a function of the rest of the terms in the
conditioning. Substitute~\eqref{eq:app-info-4},
\eqref{eq:app-info-3}, and \eqref{eq:app-info-2}
into~\eqref{eq:app-info-1}, to get
\begin{equation*}
  \pi_{t+1}(s) = \eta_{t}^{s}(\pi_t, \VEC \gamma_t, \VEC \VEC z_{t}),
\end{equation*}
where $\eta_{t}^{s}(\cdot)$ is given by~\eqref{eq:app-info-1}, \eqref{eq:app-info-2},
\eqref{eq:app-info-3}, and~\eqref{eq:app-info-4}. $\eta_t(\cdot)$ is the vector  $(\eta^s_t(\cdot))_{s \in \mathcal{S}}$.

\section{Proof of Proposition \ref{prop:equiv}} \label{sec:equiv_proof}

(a) For any given control strategy $\VEC g^{1:n}$ in the basic model, define a coordinated strategy $\VEC d$ for the coordinated system as
\begin{equation} \label{eq:equiv1}
 d_t(C_t) = 
          \big(g^1_t( \cdot, \cdot, C_t), \dots, g^n( \cdot, \cdot, C_t) \big).
\end{equation}
Consider Problems~\ref{prob:finite} and~\ref{prob:coordinator}. Use
control strategy $\VEC g^{1:n}$ in Problem~\ref{prob:finite} and coordination
strategy $\VEC d$ given by~\eqref{eq:equiv1} in
Problem~\ref{prob:coordinator}. Fix a specific realization of the
primitive random variables $\{X_1,W^j_t,t=1,\dots,T, j =0, 1,\dots, n\}$ in the two problems. Equation~\eqref{eq:observation} implies that the realization of $\VEC Y_1$ will be the same in the two problems. Then, the choice of $\VEC d$ according
to~\eqref{eq:equiv1} implies that the realization of the control actions $\VEC U_1$ will be the same in the two problems. This implies that the realization of the next state $X_2$ and the memories $\VEC M_2$, $C_2$ will be the same in the two problems. Proceeding in a similar manner, it is clear that the choice of $\VEC d$ according
to~\eqref{eq:equiv1} implies that the realization of the state $\{X_t;
\allowbreak t=1,\dots,T\}$, the observations $\{\VEC Y_t;
\allowbreak t=1,\dots,T\}$,  the control actions $\{\VEC U_t;
\allowbreak t=1,\dots,T\}$ and the memories $\{\VEC M_t;
\allowbreak t=1,\dots,T\}$ and $\{C_t;
\allowbreak t=1,\dots,T\}$ are all identical in Problem~\ref{prob:finite}
and~\ref{prob:coordinator}. Thus, the total expected cost under
$\VEC g^{1:n}$ in Problem~\ref{prob:finite} is same as the total
expected cost under the coordination strategy given
by~\eqref{eq:equiv1} in Problem~\ref{prob:coordinator}. That is,
$ J(\VEC g^{1:n}) = \hat J(\VEC d)$.

(b) The second part of Proposition \ref{prop:equiv} follows from similar arguments as above. 

\section{Equivalence between the model of this paper and the model
of~\cite{MahajanNayyarTeneketzis:2008}}
\label{app:equiv}

We refer to the model of this paper as the PHS (partial history sharing) model
and the model of~\cite{MahajanNayyarTeneketzis:2008} as the CO (common
observation) model. First, we describe the CO model and then show the both models
are equivalent by showing that the PHS model is a special case of CO model and
vice versa.

\subsection*{The CO Model}
The following model was presented in~\cite{MahajanNayyarTeneketzis:2008}; we use
a slightly different notation so that the notation matches with that of our
paper.

Consider a system with $n$ controllers. Let $X_t$ denote the state of the
system, $Z_t$ denote the common observation of all controllers, $Y^i_t$ denote
the private observation of controller~$i$, $M^i_t$ the contents of the memory of
controller~$i$, and $U^i_t$ the control action of controller~$i$, $i=1,\dots,n$.

The system dynamics and observation equations are given by
\begin{align}
  X_{t+1} &= f_t(X_t, U^{1:n}_t, W^0_t), \\
  Y^i_t   &= h^i_t(X_t, U^{1:i-1}_t, W^i_t), \quad i=1,\dots,n,\\
  Z_t     &= c_t(X_t, U^{1:n}_{t-1}, Q_t),
\end{align}
where $\{X_1, Q_t, W^i_t, i=0,\dots,n, t=1,\dots,T\}$ are independent random
variables. 

At time~$t$, controller~$i$ generates a control action and updates its memory as
follows:
\begin{align}
  U^i_t &= g^i_t(Z_{1:t}, Y^i_t, M^i_{t-1}), \\
  M^i_t &= r^i_t(Z_{1:t}, Y^i_t, M^i_{t-1}).
\end{align}

At each time an instantaneous cost $l_t(X_t, U^{1:n}_t)$ is incurred. The system
objective is to choose a control strategy $g^{1:n}_{1:T}$ and a memory update
strategy $r^{1:n}_{1:T}$ to minimize a total expected cost.

\subsection*{The PHS model is a special case of CO model}

Consider the PHS model described in Sec~II-A of the paper and define
\begin{align*}
  \tilde X_t &= (X_t, Y^{1:n}_{t}, M^{1:n}_{t}, Z^{1:n}_{t-1}), \\
  \tilde U^i_t &= U^i_t, \quad i=1,\dots,n\\
  \tilde Y^i_t &= (Y^i_t, M^i_{t}), \quad i=1,\dots,n\\
  \tilde Z_t   &= Z^{1:n}_{t-1}, \\
  \tilde M^i_t &= \emptyset, \quad i=1,\dots,n.
\end{align*}
%Choose the memory update rule $r^{1:n}_{1:T}$ according to the protocol for
%updating $M^i_t$ described in Sec~II-A.3(ii). 
Define the cost function
\[ \tilde l_t(\tilde X_t, \tilde U^{1:n}_t) = l_t(X_t, U^{1:n}_t). \]

It is easy to verify that the model $(\tilde X_t, \tilde U^{1:n}_t, \tilde
Y^{1:n}_t, \tilde M^{1:n}_t, \tilde Z_t)$ defined above is a special case of CO model. 

\subsection*{The CO model is a special case of PHS model}

In the CO model, the local observations $Y^i_t$ of controller~$i$ depends on the
control action $U^{1:i-1}_t$. This feature is not present in PHS model.
Nonetheless, we can show that CO model is a special case of the PHS model by
splitting time and assuming that in the PHS model only one controller acts at
each time.

Define the following system variables for $\tau = 1, \dots, nT$. For ease of
notation, when $tn < \tau \le (t+1)n$, we will write $\tau$ as $tn + i$. Thus,
the system variables are defined for $t=1,\dots,T$ and $i=1,\dots,n$:
\begin{align*}
  \tilde X_{tn+1} &= (X_t, U^{1:n}_{t-1}, M^{1:n}_{t-1}), &
  \tilde X_{tn+i} &= (X_t, U^{i:i-1}_t, M^{1:i-1}_{t}, M^{i:n}_{t-1}), \quad i=2,\dots,n,
  \\
  \tilde Z_{tn+1} &= Z_t, &
  \tilde Z_{tn+i} &= \emptyset, \quad i=2,\dots,n,
\end{align*}
\begin{alignat*}{2}
  \tilde Y^i_{tn+j}    &= \begin{cases} (Y^1_t, M^1_{t-1}, Z_t), & \text{if $i=j=1$}, \\
                                        (Y^i_t,M^i_{t-1}) & \text{if $i=j\neq 1$}, \\
                                        \emptyset, &\text{otherwise};
                          \end{cases}  \quad &&i,j=1,\dots,n\\
  \tilde U^i_{tn+j}    &= \begin{cases} (U^i_t,M^i_t), & \text{if $i=j$}, \\
                                        \emptyset, &\text{otherwise};
                          \end{cases}\quad  &&i,j=1,\dots,n\\
  \tilde M^i_{tn+j} &= \emptyset, &&j=1,\dots,n.
\end{alignat*}

Define the cost function as:
\[ \tilde l_{tn+i}(\tilde X_{tn+i}, \tilde U^{1:n}_{tn+i})
  = \begin{cases}
      l_t(X_t, U^{1:n}_t), & \text{if $i=n$}, \\
      0, & \text{otherwise}.
    \end{cases}
\]

It is easy to verify that the model $(\tilde X_\tau, \tilde U^{1:n}_\tau, \tilde
Y^{1:n}_\tau, \tilde M^{1:n}_\tau, \tilde Z_\tau)$ defined above is a special case of PHS model.

\end{document}